\newcommand{\dashedentry}[2]{%
  \makebox[#1]{\hrulefill}\quad #2 \quad \makebox[#1]{\hrulefill}%
}
\definecolor{bluegray}{rgb}{0.4, 0.6, 0.8}
\definecolor{turquoise}{rgb}{0.2, 0.7, 0.6}
\definecolor{darkgreen}{rgb}{0.4, 0.6, 0.3}
\definecolor{pink}{rgb}{1, 0.65, 0.8}
\newcounter{commcount}\setcounter{commcount}{0}
\definecolor{purpleish}{rgb}{0.5, 0.0, 0.5}
\definecolor{pinkish}{rgb}{0.8, 0.0, 0.4}
\title{Phase space tableau simulation for quantum computation}
\author[1]{Selman Ipek}
\author[2]{Atak Talay Yucel}
\author[1]{Farzad Shahi}
\author[3]{Cagdas Ozdemir}
\author[1]{Cihan Okay\footnote{cihan.okay@bilkent.edu.tr}} 
\affil[1]{{\small{Department of Mathematics, Bilkent University, Ankara, Turkey}}}
\affil[2]{{\small{Department of Computer Engineering, Bilkent University, Ankara, Turkey}}}
\affil[3]{{\small{Department of Physics, Bilkent University, Ankara, Turkey}}}
\begin{document}
  \maketitle  
  
\begin{abstract}
{We introduce a novel tableau-based classical simulation method for quantum computation, formulated within the phase space framework of the extended stabilizer theory of closed non-contextual operators. This method} enables the efficient classical simulation of a broader class of quantum circuits beyond the stabilizer formalism. {We implement the simulator and benchmark its performance on basic quantum algorithms, including the hidden shift and Deutsch--Jozsa algorithms.}
\end{abstract}

 	\tableofcontents

\section{{Introduction}}

Classical simulation---the task of performing quantum algorithms using classical computation---plays a central role in understanding the computational power of quantum systems relative to their classical counterparts. Although simulating universal quantum circuits is widely believed to require exponential classical resources, various methods have been developed that perform efficiently in specific regimes. {Our work is motivated by a recent line of research on polyhedral simulators, which provide a geometric framework for systematically extending the boundaries of efficient classical simulation, offering new insights into the structure underlying quantum advantage.}

A  
{well-known}
tractable class of quantum operations is {described} by \textit{stabilizer theory}, which focuses on stabilizer input states, Clifford unitaries, and measurements of Pauli operators—optionally involving classical feed-forward and randomness. Among classical simulation techniques, stabilizer-based methods stand out for efficiently encoding certain quantum circuits, called stabilizer circuits, which can exhibit significant entanglement yet remain simulable in resources scaling only polynomially with $n$, the number of qubits. Central to this efficiency is the Gottesman–Knill (GK) theorem \cite{gottesman1999group22}, which guarantees that stabilizer circuits can be efficiently classically simulated.
Nonetheless, stabilizer circuits alone are not universal for quantum computation. Universality can be achieved by supplementing them with \textit{non-stabilizer} resources. In the model of quantum computing with magic states (QCM), non-Clifford operations are realized via the injection of specially prepared \textit{magic} states~\cite{magic}. 
{Alongside stabilizer rank methods~\cite{bravyi2016trading,bravyi2016improved,bravyi2019simulation}, approaches based on sampling from quasi-probability distributions have provided a concrete framework for systematically studying the classical simulation of QCM and its complexity. A prominent example is the work of Howard and Campbell~\cite{howard2017application}, where quantum states are represented as quasi-probabilistic mixtures of stabilizer states.}
Robustness of magic, the associated magic monotone, characterizes the minimal amount of negativity needed to represent a magic state and, in turn, controls the cost of classical simulation.
{The a}pplication of Howard and Campbell's approach relies on the efficient classical simulation of stabilizer circuits, which can be implemented directly using a tableau 
\cite{nielsen2010quantum} or its improved version introduced by Aaronson and Gottesman (AG) \cite{aaronson2004improved}. 
~More recently, Seddon et al. \cite{seddon2021quantifying} introduced a quasi-probability representation based on the notion of stabilizer dyads, whose corresponding dyadic frame simulator makes use of the phase sensitive tableau method of Bravyi et al. \cite{bravyi2016trading}, refining the tableau of \cite{aaronson2004improved}.

An entirely distinct quasi-probability representation, known as the \emph{phase space representation}, was proposed by Raussendorf et al.~\cite{raussendorf2020phase} as a {unified extension of} phase space methods based on Wigner functions, previously developed for qudits of odd-prime local dimension~\cite{veitch2012negative} and extended to qubits~\cite{raussendorf2017contextuality}.
Key to this approach are the so-called \emph{closed non-contextual (CNC) {operators}}, objects that {constitute the phase space and} extend  stabilizer states. 
{F}or qudits, it is possible to attribute to 
{every} 
Pauli observable a deterministic outcome, this is not possible for qubits due to the emergence of state-independent contextuality, a phenomenon  
{manifested}
in the Mermin square paradox \cite{mermin1993hidden}.%
~

In this work, we introduce a novel tableau method for the phase space simulation, called the {\textit{phase space tableau} or \textit{CNC tableau}}, 
enabling the efficient classical simulation of broader classes of quantum circuits than stabilizer circuits.  
The complexity of our simulation scales as that of the {AG} simulation. 
An implementation of our algorithm can be found in \cite{CNCSim}.

\begin{thm}\label{thm:cnc-tableau}
The following statements are valid for the CNC tableau:
\begin{enumerate}
\item A CNC tableau can be encoded using $O(n^{2})$ bits of memory.

\item The update of a CNC tableau under:
\begin{itemize}
\item Clifford unitaries can be performed in $O(n)$ time,
\item Pauli measurements can be performed in {$O(n^{2})$ time}.
\end{itemize}
\end{enumerate}
\end{thm}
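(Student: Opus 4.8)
The strategy is to extend the Aaronson--Gottesman (AG) tableau, treating a CNC operator as a stabilizer-type core decorated by a small non-contextual sector, and to reuse the AG complexity accounting wherever possible. Recall that a CNC operator on $n$ qubits is fixed by a CNC set $\Omega\subseteq\mathbb{Z}_2^{2n}$ together with a value assignment $\gamma$, and that a maximal $\Omega$ decomposes canonically into an isotropic subspace $I$ with $\dim I=n-r$ together with $2r+1$ cosets of $I$ whose representatives $a_1,\dots,a_{2r+1}\in I^{\perp}$ pairwise anticommute modulo $I$ (the case $r=0$ recovers a stabilizer state). I would store this as: $n-r$ stabilizer generators of $I$ and their $n-r$ destabilizer partners, exactly as in AG but of reduced size, together with the $2r+1$ anticommuting generators; every row is a length-$2n$ binary symplectic vector carrying one phase bit that records $\gamma$, and one also records $r$. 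Since $r\le n$ there are $O(n)$ rows of length $O(n)$, which gives the $O(n^{2})$-bit bound of item~(1).

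For the Clifford part of item~(2), the key observation is covariance: a Clifford acts on phase space as an affine symplectic map preserving the Pauli-multiplication cocycle, so it carries a CNC operator to a CNC operator of the same type $r$ and, crucially, carries a canonical-form tableau to a canonical-form tableau --- isotropy of $I$, the destabilizer relations, and pairwise anticommutation of $a_1,\dots,a_{2r+1}$ are all symplectic invariants. Hence an update under a generator $H_i$, $S_i$, or $\mathrm{CNOT}_{ij}$ is just the corresponding sparse column operation applied to each row together with the standard AG phase-bit rule; this is $O(1)$ per row over $O(n)$ rows, so $O(n)$ in total, and no re-canonicalization is ever needed.

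The real content is Pauli measurement. For a Pauli labelled by $p\in\mathbb{Z}_2^{2n}$, I would first compute the symplectic inner products of $p$ with all stored generators ($O(n^{2})$) and, using the destabilizers, reduce $p$ modulo $I$; the structure theorem then tells us that the outcome is deterministic precisely when $p\in\Omega$, i.e.\ when $p\in I$ or the class of $p$ equals one of the $2r+1$ stored classes $\bar a_j$. In the deterministic branch the outcome sign is read off by expanding $p$ in the generators of $\Omega$ via the destabilizer rows and accumulating the cocycle corrections to $\gamma$ --- an $O(n)$-fold sum of $O(n)$-time row additions, hence $O(n^{2})$. In the random branch one must rewrite $(\Omega,\gamma)$: as in AG, a generator anticommuting with $p$ is used to clear $p$ from the remaining stabilizer generators and $p$ is reinstalled with a fresh random sign; the new feature is that $p$ may anticommute with several of the $a_j$, so those rows must be repaired by multiplication with a fixed reference row, and when $p$ commutes with all of $I$ the rank $r$ drops by one and the anticommuting sector must be pruned to a new pairwise-anticommuting family of the right size. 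All of this is a bounded number of Gaussian-elimination passes over $O(n)$ rows of length $O(n)$, so it stays within $O(n^{2})$.

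The main obstacle is precisely the correctness of this measurement update rather than its complexity: one must (i) state the Born rule for CNC operators sharply enough to decide deterministic-versus-random and to extract both the probability and the correct sign; (ii) prove that the repaired data again satisfies every canonical-form constraint --- isotropy, destabilizer pairing, and pairwise anticommutation modulo $I$ --- and in particular that the new value assignment remains globally consistent under the Pauli cocycle $\beta$; and (iii) check that every sub-step is a constant number of length-$O(n)$ row operations. Once the update algorithm and its correctness are in place, the stated time bounds follow by the same bookkeeping as in the AG analysis.
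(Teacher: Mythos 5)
Your proposal mirrors the paper's own proof: the same tableau layout ($n-r$ stabilizer and $n-r$ destabilizer rows plus $2r+1$ pairwise-anticommuting rows, each a $2n$-bit symplectic vector with a phase bit) gives the $O(n^{2})$ memory bound, Clifford covariance gives the $O(n)$ per-gate update exactly as in Aaronson--Gottesman, and your measurement accounting --- $O(n^{2})$ symplectic products to decide which case applies, destabilizer expansion for the deterministic branch, and $O(n)$ row operations of length $O(n)$ for the random repairs --- is precisely the bookkeeping the paper performs. The correctness of the measurement update that you defer is likewise established in the paper before this theorem (via its four-case refinement, where in the analogue of your ``rank drops by one'' step the type can in fact drop by $t>1$), so the theorem's proof there is, as in your proposal, pure complexity accounting on top of those rules.
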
%

Our phase space simulator improves upon the {stabilizer} simulator by incorporating more structure than the standard stabilizer and destabilizer components in the tableau. Specifically, the {standard} qubit phase space---defined as Pauli operators modulo scalars---is decomposed into a stabilizer part, a destabilizer part, and an additional Jordan--Wigner component, {a special type of fermionic transformation}.
Under a Pauli measurement, each of the three components in our decomposition is updated according to a rule that refines the standard stabilizer update procedure. In {AG}, the improved version of the GK simulator~\cite{aaronson2004improved}, there are two update cases depending on whether the measured Pauli operator lies inside or outside the stabilizer group of the state. Similarly, the original CNC update rules of~\cite{raussendorf2020phase} also involve two cases. In contrast, our update rules are divided into four cases {(Case I–IV)}, more closely mirroring the three cases considered in~\cite{aaronson2004improved} for mixed stabilizer states. 
{Moreover, when used as a quasi-probability simulator, our method achieves improved efficiency over stabilizer-based approaches, as its simulation cost is governed by phase space robustness---a quantity known to be upper bounded by the robustness of magic~\cite{raussendorf2020phase}.
}

We organize the paper as follows. Section~\ref{sec:background} introduces the necessary background and notation. In Section~\ref{sec:jw-section}, we develop the {fundamental} notions of JW sets, bases, and decompositions. Section~\ref{sec:the update rules} applies these notions to analyze the update rules, and Section~\ref{sec:tableau representation} presents the construction of the CNC tableau. {Finally, Section~\ref{sec:phase space tableau simulation} is devoted to complexity analysis and benchmarking on basic quantum algorithms.}

{
\paragraph{Acknowledgements.}
This work is supported by the Digital Horizon Europe project FoQaCiA, GA no. 101070558. The last author also acknowledges support from the US
Air Force Office of Scientific Research under award number FA9550-24-1-0257.  
}  
  
\section{{The phase space}}\label{sec:background}

In this section, we review stabilizer theory {\cite{nielsen2010quantum}} and {its extension based on closed non-contextual (CNC) operators, as introduced in~\cite{raussendorf2020phase}.}

\subsection{Stabilizer theory}

We consider a system of $n$ qubits modeled by a Hilbert space $\hH = (\CC^{2})^{\otimes n}$. Given a bit string $a = (a_{X},a_{Z})\in E_{n}$, where $E_{n} := \ZZ_{2}^{2n}$, a Pauli operator is a 
{Hermitian operator}
of the form
\begin{eqnarray}
T_{a} = i^{\phi(a)}X^{a_{X}}Z^{a_{Z}},
\end{eqnarray}
where $\phi:E_{n}\to \ZZ_{4}$ is a phase function that is conventionally chosen to be $\phi(a) = a_{X}\cdot a_{Z}$. Here, we have adopted the notation $X^{a_{X}} = X^{(a_{X})_{1}}\otimes \cdots \otimes X^{(a_{X})_{n}}$ with a similar notation for $Z^{a_{Z}}$. 


{The vector space $E_n$ has a symplectic {form} defined by} 
$\left [a,b \right ] = a_{X}\cdot b_{Z}+a_{Z}\cdot b_{X}$ where the dot product is the usual 
{dot}
product {over $\ZZ_2$}.   
Pauli operators $T_{a}$ and $T_{b}$ then satisfy $T_{a}T_{b} = (-1)^{[a,b]}T_{b}T_{a}$ and thus commute if and only if the corresponding symplectic {form} $[a,b] = 0$ and anti-commute otherwise. 
The product of  
{commuting}
Pauli operators satisfies  
$T_{a}T_{b} = (-1)^{\beta(a,b)}T_{a+b}$; {see for example}  
\cite{okay2017topological}.
Using our phase convention $\beta$ has an explicit form given by%
{%
\begin{eqnarray}
\beta(a,b) = {\frac{\phi(a) + \phi(b) + 2(a_z \cdot b_x) - \phi(a \oplus b)}{2}},
\label{eq:beta}
\end{eqnarray}
{where $\oplus$ denotes addition in $\ZZ_{2}$ and other operations are over $\ZZ_4$.}
}

The Pauli group $P_{n}$ is given by the set of operators $\{\pm 1, \pm i\}\times \{T_{a}~:~a\in E_{n}\}$ with matrix multiplication as the group operation. Its quotient with respect to its center $\Span{iI}$ is denoted $\bar P_n$. {This group is isomorphic to $E_n$}. The $n$-qubit Clifford group is the normalizer of the Pauli group with phases quotiented out
\begin{eqnarray}
\Cl_{n} := \left \{U\in \uU(\hH)\,:\,UP_{n}U^{\dagger} = P_{n} \right \}/\Span{e^{i\theta}I\,:\,\theta\in \RR}.
\end{eqnarray}
We can further identify the quotient group $\Cl_{n}/\bar P_{n}$ with the symplectic group $\Sp_{2n}(\ZZ_{2})$, the matrix group of invertible linear maps $E_{n}\to E_{n}$ that preserve the symplectic {form}.
{That is, there is a surjective group homomorphism
\[\varphi:\Cl_{n}\to {\text{Sp}_{2n}(\ZZ_{2})}.\]}
In fact, any Clifford unitary $U_{{S}}$ acts by conjugation on Pauli operators according to 
\[
U_{S}T_{a}U_{S}^{\dagger} = \pm T_{S(a)}
\]
where $S$ is symplectic map.

{A} subspace $I\subset E_{n}$ is called \textit{isotropic} if for every $a,b\in I$ we have that $[a,b] = 0 $. 
 The orthogonal complement $I^{\perp}$ of an isotropic subspace $I$ consists of all elements that commute with $I$.  
Note that $I\subset I^{\perp}$. Moreover, if $\text{dim}(I)=n-m$ then $\text{dim}(I^{\perp}) = n+m$ {and} $\text{dim}(I) + \text{dim}(I^{\perp}) = 2n = \text{dim}(E_{n})$. An isotropic subspace is called {\it maximal} if $I = I^{\perp}$, i.e. it is self-dual. 
A stabilizer code is a projector%
$$\Pi_{I}^{s} = \frac{1}{|I|}\sum_{a\in I}(-1)^{s(a)}T_{a}$$%
where $I$ is an isotropic and $s:I\to \ZZ_{2}$ is a value assignment. When $I$ is maximal then $\Pi_{I}^{s}$ is precisely a stabilizer state. {At the other extreme, when $ I = \Span{a} $, we denote the corresponding projector as $\Pi_a^s$, where $s$ represents the value in $\mathbb{Z}_2$ assigned to $a$.}
{By Witt's lemma the symplectic group acts on the set of isotropic subspaces of fixed dimension in a transitive way. Similarly, the action of the Clifford group on the stabilizer projectors of fixed rank is transitive.}

\subsection{Closed non-contextual operators} 
 
Closed non-contextual operators {provide} a natural extension of the stabilizer theory.  
 
{
\Def{\label{def:closed non-contextual set}
A subset $\Omega\subset E_{n}$ is called \textit{closed} if, for every pair $a,b\in\Omega$ satisfying $[a,b]=0$, it follows that $a+b\in \Omega$. For a closed set $\Omega$, a function $\gamma:\Omega\to \ZZ_{2}$ is called a \textit{value assignment} if, for all commuting $a,b\in \Omega$, the relation
\begin{eqnarray}
\gamma(a+b) = \gamma(a)+\gamma(b)+\beta(a,b),\label{eq:value-assignment}
\end{eqnarray}
holds, where $\beta(a,b)$ is defined in Eq.~(\ref{eq:beta}). 
A closed subset $\Omega \subset E_{n}$ that admits such a value assignment is called a closed non-contextual (CNC) set.%
}
}%
 
The structure of CNC sets was extensively studied in \cite{raussendorf2020phase}; see also \cite{kirby2019contextuality}. 
We call a CNC set \textit{maximal} if it is not contained in any other {CNC set}.%
~Given a CNC set $\Omega$ and a value assignment $\gamma:\Omega\to\ZZ_{2}$ (sometimes called a \textit{CNC pair}) we define a unit trace Hermitian operator
\begin{eqnarray}\label{eq:CNC operator}
A_{\Omega}^{\gamma} := \frac{1}{2^{n}}\sum_{a\in\Omega}(-1)^{\gamma(a)}T_{a}.
\end{eqnarray}
Operators of this form are called {\it CNC operators}. {Stabilizer states are particular examples of CNC operators.}

\subsection{Classical simulation}
 
{CNC operators} span the space of unit trace Hermitian operators. In particular, any quantum state $\rho$ can be expressed by an affine mixture of maximal CNC operators. Thus these operators form a Clifford covariant finite state space that can be used for a sampling based classical simulation algorithm \cite{raussendorf2020phase}. 
{This algorithm belongs to a class of sampling algorithm on the set of vertices of an operator-theoretic polytope. 
Let $\SP_n$ denote the stabilizer polytope defined as the convex hull of $n$-qubit stabilizer states. The dual of this polytope, denoted by $\Lambda_n$, provides a probabilistic algorithm for universal quantum computation \cite{zurel2020hidden}. Furthermore, it was also shown for all $n$ that CNC operators with maximal $\Omega$ are vertices of the $\Lambda$ polytopes.
Let $\CP_n$ denote the CNC polytope defined as the convex hull of the CNC operators.} 
It is not difficult to show ({see Corollary~\ref{cor:maximality}}) that stabilizer states are not maximal CNC operators and thus they lie in {$\CP_{n}$}, so that 
{\[
\SP_n \subset \CP_n \subset \Lambda_n .
\]}

Classical simulation based on the CNC state space extends and improves the classical simulation based on stabilizer states.%
~As demonstrated in \cite{raussendorf2020phase}, the update of CNC operators under Clifford unitaries and Pauli measurements can be efficiently simulated on a classical computer.%
~{In \cite{raussendorf2020phase} they demonstrate this using}%
~the characterization of CNC sets (described in more detail below) as a union of isotropic subspaces, each of which can essentially {be} updated via a stabilizer tableau.%
~{The goal of the present work is to refine the analysis of~\cite{raussendorf2020phase}, making the implementation and computational efficiency of the update rules more explicit.}

The amount of negativity required to represent any quantum state $\rho$ as an affine mixture of maximal CNC operators is less than when using stabilizer states. This is quantified by the \textit{robustness} measure
\begin{eqnarray}
\mathfrak{R}(\rho) := \min_{{W}\,:\,\rho = \Span{{W},A}} \norm{{W}}_{1}
\end{eqnarray}
where $\rho = \sum_{\alpha}{W_{\rho}(\alpha)}A_{\alpha}$ is an affine mixture and the sum is over CNC pairs labeled by $\alpha$. The robustness is a monotone under Clifford unitaries and Pauli measurements \cite{raussendorf2020phase} and is upper bounded by $\mathfrak{R}_{S}$, the analogous measure using stabilizer states \cite{howard2017application}. By the arguments in~\cite{pashayan2015estimating}, classical simulation using $\CP_n$ is therefore more efficient than simulation based on stabilizer states. {This gain in efficiency motivates the implementation of our algorithm.}

\section{{Structure of the phase space}}\label{sec:jw-section}

{In this section, we introduce the symplectic preliminaries required for constructing the tableau algorithm. The key concept is a new decomposition of the symplectic vector space $E_n = \mathbb{Z}_2^{2n}$ into stabilizer, destabilizer, and Jordan--Wigner components, which we refer to as {a} Jordan--Wigner decomposition. This decomposition allows us to encode additional information in our {tableau}, extending beyond stabilizer {tableau}.
}

\subsection{{Jordan-Wigner decomposition}}

A set of elements 
\[
\mathcal{S} = \set{a_1, \cdots, a_N} \subset E_n
\]
is called an {\it anti-commuting set} if {its elements pairwise anti-commute}, i.e., $[a_i,a_j]=1$ for all $i\neq j$. An anti-commuting set is called a maximal anti-commuting set if it is not strictly contained in any other anti-commuting set. 

{We begin with some basic observations on maximal anti-commuting sets; see also \cite{sarkar2019sets}.} 

\Lem{\label{lem:maximal_anticommutings_are_odd}
For an anticommuting set $\mathcal{S}$ with $N=|\mathcal{S}|$ even, the sum
$
b= \sum_{i=1}^{N} a_i
$  
is non-zero and adding this element to the set results in a maximal anti-commuting set.   
} 
\Proof{{If $N=0$, then $\mathcal{S}=\set{0}$ is clearly maximal. So, assume $N > 0$.} {If $b=0$ then $a_N=\sum_{i=1}^{N-1} a_i$ and $[a_{N},a_{i}]={\sum_{j=1}^{N-1}[a_j, a_i]}=0$ for $i=1,\cdots,N-1$. Therefore $b$ is non-zero. Adding this element to the set gives a maximal anti-commuting set since any other element that anti-commutes with all the $a_i$'s commutes with $b$.}
}

For an anti-commuting set $\mathcal{S}$ and an element $c\in E_n$, we define
\begin{equation}\label{eq:A and C}
{\aA_c} = \set{a_i \mid [a_i, c] = 1, i = 1, \cdots, N} \quad \text{and} \quad {\cC_c} = \set{a_i \mid [a_i, c] = 0, i=1, \cdots, N}.
\end{equation}%
{
Note that the definition of these sets also depends on $\mathcal{S}$, which is omitted from notation for clarity.}

\Pro{\label{pro:maximal_anticommuting} 
An anti-commuting set $\mathcal{S}$ is maximal if and only if $\sum_{i=1}^{N} a_i = 0$. Furthermore, in this case, $N$ must be odd.}
\Proof{ 
First, assume that $\mathcal{S}$ is maximal {and $b = \sum_{i=1}^{N}a_i$}. Lemma \ref{lem:maximal_anticommutings_are_odd} implies that $N$ is odd. 
Let $c \in E_n$ {be} such that $[b, c] = 1$.
Then,
\[
[b,c]=\sum_{i=1}^{N}[a_{i},c]=\sum_{a\in \aA_c}[a,{c}]=1,
\]
which implies that $|\aA_c|$ is odd and $|\cC_c|$ is even since $N$ is odd. The element 
$
d=c+\sum_{a\in \cC_c} a 
$
satisfies $[d,a_i]=1$ for all $i=1,\cdots,N$ and $d\not\in \mathcal{S}$. This contradicts to the maximally of $\mathcal{S}$. {Hence, no such $c$ exists and this means $b = 0$.}
For {the converse}, assume that $b = \sum_{i=1}^{N} a_i = 0$. {By Lemma \ref{lem:maximal_anticommutings_are_odd}, $N$ must be odd. Since we have $a_N = \sum_{i=1}^{N-1}a_i$ and $|\mathcal{S} {\setminus} \set{a_{N}}|$ is even,
Lemma~\ref{lem:maximal_anticommutings_are_odd} implies that $\mathcal{S}$ is maximal.} 
}

Notice that {$\mathcal{S}$} is linearly dependent if and only if there exists a nonempty $I \subset \set{1, \cdots, N}$ such that $\sum_{i \in I} a_i = 0$. 
  The latter condition is equivalent to $\set{a_i}_{i \in I}$ being a maximal anti-commuting set by Proposition~\ref{pro:maximal_anticommuting}.
{Therefore, we have the following:}
    \begin{itemize}
      \item If $\mathcal{S}$ is maximal, then any subset of $\mathcal{S}$ with size $N-1$ is linearly independent.
      \item If $\mathcal{S}$ is not maximal, then $\mathcal{S}$ is linearly independent.
    \end{itemize}
As a consequence, we can see that the maximum size of an anti-commuting set is $2n+1$. These max-sized anti-commuting sets are particularly important for our discussion.



\Def{\label{def:JW elements}
An anti-commuting set of $2n+1$ elements, $\{a_1, \dots, a_{2n+1}\}$, is called a \textit{Jordan-Wigner (JW) set}, and its elements are referred to as JW elements.
{Any $2n$ JW elements form a basis of $E_n$ that we call a \textit{JW basis}.}
}

We typically choose the first $2n$ JW elements as the basis so that the remaining element $a_{2n+1}$ is the sum of the JW basis elements.

\Lem{\label{lem:extending_nonmaximal_anticommuting_to_jw}
  Any anti-commuting set 
 that is not maximal 
 can be extended to a JW set.
}



\Proof{{Let $\mathcal{S}_N= \set{a_1, \cdots, a_N}$ be a {non-maximal} anti-commuting set.}
If $N = 2n$ then the result directly follows from Lemma~\ref{lem:maximal_anticommutings_are_odd}.  
Therefore, we are done if we can show that $\mathcal{S}_{N}$ can be extended to an anti-commuting set $\mathcal{S}_{N+1} = \mathcal{S}_{N}\cup \set{a_{N+1}}$ that is not maximal when $N < 2n$. 
  If $N$ is odd, there exists $a_{N+1} \in E_n$ such that $[{a_{N+1}}, a_i] = 1$ for all $i \in \set{1, \cdots, N}$ since $\mathcal{S}_N$ is not maximal. Since $N+1$ is even, the extended set $\mathcal{S}_{N+1}$ is not maximal by Lemma~\ref{lem:maximal_anticommutings_are_odd}.
Now if $N$ is even, then we can find an element $c \in E_n \setminus \Span{a_1, \cdots, a_N}$ since $N < 2n$.  
{W}e can construct
  \[ a_{N+1} = \begin{cases}
    c + \sum_{a \in \aA_c} a & \text{if } |\cC_c| \text{ is odd} \\
    c + \sum_{a \in \cC_c} a & \text{if } |\cC_c| \text{ is nonzero and even}\\
    c & \text{if } |\cC_c| = 0.
  \end{cases}
  \]
  It is easy to see that $[a_{N+1}, a_i] = 1$ for all $i \in \set{1, \cdots, N}$. So, $\set{a_1, \cdots, a_N, a_{N+1}}$ is an anti-commuting set.
  Moreover, $\sum_{i=1}^{N+1} a_i \neq 0$ since {otherwise} it would imply that $c \in \Span{a_1, \cdots, a_N}$. Hence, $\set{a_1, \cdots, a_N, a_{N+1}}$ is not maximal.
}

{
\Def{\label{def:JW symplectic transformations}
The Jordan–Wigner (JW) transform of a symplectic basis $\{e_i, f_i\}_{i=1}^n$ is given by the anticommuting set:
\[
a_1 = e_1, \quad a_2 = f_1,
\]
and for $i = 2, \ldots, n$,
\begin{equation} \label{eq:symplectic_to_jw}
  a_{2i-1} = e_i + \sum_{k=1}^{i-1} (e_k + f_k), \quad
  a_{2i} = f_i + \sum_{k=1}^{i-1} (e_k + f_k).
\end{equation}

Conversely, the inverse JW transform of an anticommuting set $\{a_1, \ldots, a_{2n}\}$ yields the symplectic basis $\{e_i, f_i\}_{i=1}^n$, defined by
\[
e_1 = a_1, \quad f_1 = a_2,
\]
and for $i = 2, \ldots, n$,
\begin{equation} \label{eq:jw_to_symplectic}
  e_i = a_{2i-1} + \sum_{k=1}^{2i-2} a_k, \quad
  f_i = a_{2i} + \sum_{k=1}^{2i-2} a_k.
\end{equation}
}}

{T}hese transformations are inverses of each other. So, $\Span{a_1, \cdots, a_{2n}} = \Span{e_1, f_1, \cdots, e_n, f_n}$ in both cases. 
The JW elements associated to the canonical symplectic basis $\set{z_i,x_i}_{i=1}^{2n}$ are called the {\it canonical JW elements}.
{Moreover, 
\begin{equation}\label{eq:W symplectic 2k}
W = \Span{a_1,\cdots,a_{2k}}
\end{equation}
is a symplectic vector space of dimension $2k$.
}

\Lem{\label{lem:JW basis}
  Any $b\in E_n$ can be expressed in the JW basis as
  \begin{equation}\label{eq:a_k basis}
    b=\sum_{k=1}^{2n} \nu_k a_k
  \end{equation}
  where $\nu_k = [a_{2n+1}+a_k,b]$.
}
\Proof{
  \begin{align*}
    [a_{2n+1} + a_k, b] &= [a_1, b] + \cdots + [a_{k-1}, b] + [a_{k+1}, b] + \cdots + [a_{2n}, b]\\
    &= \nu_1 + \cdots + \nu_{k-1} + \nu_{k+1} + \cdots + \nu_{2n} + (2n-1)\sum_{i=1}^{2n} \nu_i\\
    &= \nu_1 + \cdots + \nu_{k-1} + \nu_{k+1} + \cdots + \nu_{2n} + \sum_{i=1}^{2n} \nu_i\\
    &= \nu_k.
  \end{align*}
  {where we used $a_{2n+1} = \sum_{k=1}^{2n} a_k$ and $[a_k, b] = \nu_k + \sum_{i=1}^{2n} \nu_i$ in the second equality.}
}

It is convenient to write $\nu_{{2n+1}}=[a_{{2n+1}},b]$, so that $\nu_k=\nu_{2n+1}+[a_k,b]$. 

\Cor{\label{cor:commuting vs anticommuting}
{For $b\in E_n$, let $\aA_b$ and $\cC_b$ be the sets defined in Eq.~(\ref{eq:A and C}) with respect to a JW set. 
Then,}
\[
b = \sum_{a\in \aA_b} a = \sum_{a'\in \cC_b} a'
\] 
and $|\aA_b|$ is even.
}
\Proof{

Using Eq.~(\ref{eq:a_k basis}), we have
\begin{align*}
b = \sum_{k=1}^{2n} \nu_k a_k = \nu_{2n+1} \sum_{k=1}^{2n} a_k + \sum_{k=1}^{2n} [a_k,b] a_k { = \nu_{2n+1} a_{2n+1}+ \sum_{a \in \bar{\aA_b}} a~.}
\end{align*}
{where $\bar{\aA_b} = \aA_b \setminus \set{a_{2n+1}}$}.
Now if $a_{2n+1} \notin \aA_b$, then $\nu_{2n+1} = 0$ and {$\aA_b = \bar{\aA_b}$. So, }$b = \sum_{a \in \aA_b} a$. If $a_{2n+1} \in \aA_b$,
then $\nu_{2n+1} = 1$ and {$\aA_b = \bar{\aA_b} \cup \set{a_{2n+1}}$ which gives}
\begin{align*}
  b = a_{2n+1} + \sum_{a \in {\bar{\aA_b}}} a = \sum_{a \in \aA_b} a.
\end{align*}
Since $\sum_{i=1}^{2n+1} a_i = 0$, we have $\sum_{a \in \aA_b} a + \sum_{a' \in \cC_b} a'= 0$ and this implies $ b = \sum_{a' \in \cC_b} a'$.
Lastly, if $|\bar{\aA_b}|$ is odd, then $a_{2n+1} \in \aA_b$ since $a_{2n+1} = \sum_{i = 1}^{2n} a_i$. If $|\bar{\aA_b}|$ is even, then $a_{2n+1} \notin \aA_b$. 
Hence, $|\aA_b|$ is even.
}

{Now, we come to the main definition of this section.}

\Def{\label{def:JW-decomposition}
A {\it Jordan--Wigner decomposition} for $E_n$ of type $m$ consists of
\begin{itemize}
\item an isotropic subspace ${I_{n-m}}\subset E_n$  of dimension {$n-m$},
\item a subspace ${W_{2m}}\subset I^\perp$ {of dimension $2m$} such that $W_{2m}\cap I_{n-m}=0$ and {$W_{2m}{\oplus}I_{n-m}=I_{n-m}^\perp$},
\item an isotopic subspace ${I_{n-m}'}\subset E_n$ of dimension {$n-m$} such that $I_{n-m}'\cap I_{n-m}^\perp=0$ {and $I_{n-m}'{\oplus}W_{2m}=(I_{n-m}')^\perp$}.
\end{itemize}
The {subspaces in the} pair $(I_{n-m},I_{n-m}')$ will be referred to as  {\it conjugate} isotropic subspaces.
} 

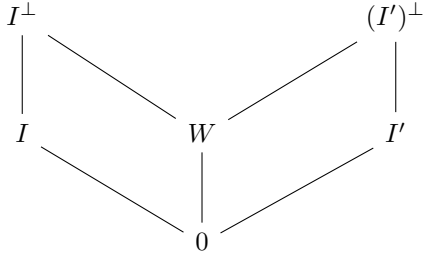
\begin{figure}[htbp]
    \centering
    \[
\begin{tikzcd}[column sep=huge, row sep=large]
I^\perp \arrow[d,no head] \arrow[dr,no head] && (I')^\perp \arrow[d,no head] \arrow[dl,no head] \\
I \arrow[dr,no head] & W \arrow[d,no head] & I' \arrow[dl,no head] \\
&0&    
\end{tikzcd}
\]
    \caption{{The intersections and sums of the subspaces in a JW decomposition: {$W\cap I=0$, $W\oplus I=I^\perp$, $W\cap I'=0$, and $W\oplus I'=(I')^\perp$}}}
    \label{fig:JW decomposition} 
\end{figure}

We will write 
\[
E_n = I_{n-m} \oplus W_{2m} \oplus I_{n-m}'
\]
for a decomposition of type $m$. {Subscripts indicating the dimensions will usually be omitted to simplify the notation. The relationship between these components is depicted in Fig.~(\ref{fig:JW decomposition}). Note that we have 
\[I^\perp \cap (I')^\perp =W.\]}

{

\Lem{\label{lem:conjugate basis}
Given conjugate isotropic {subspaces} $I$ and $I'$, we can choose a basis $\set{e_1,\cdots,e_{n-m}}$ and $\set{f_1,\cdots,f_{n-m}}$, respectively, such that
\[
[e_i,f_j]=\delta_{ij}
\]
where $1\leq i,j\leq n-m$.
}
\Proof{
It is a well-known fact in stabilizer theory that for $b\notin I^\perp$, we can make a change of basis so that only one of the basis vectors anti-commute with $b$, and the rest commute \cite{nielsen2010quantum}. 
Let $\lambda$ be the minimum of the set $\set{1\leq i\leq n-m \mid [e_i,b]=1}$. Then,
we can write 
\begin{align}\label{eq:modified basis}
I &= \Span{\bar{e}_{1}, \cdots, \bar{e}_{n-m}}\notag\\
\bar e_i &= \begin{cases}
  e_i + e_\lambda & \text{if } [e_i,b]=1 \text{ and } i \neq \lambda \\
   e_i & \text{otherwise.}  
  \end{cases} 
\end{align} 
We apply this change of basis first with $b=f_1$, and apply another change of basis with $b=f_2$, and so on. Note that since $f_i$'s pairwise commute we obtain the desired commutation relations.
}

Such a basis will be called a {\it conjugate basis}. Note that given $\set{e_i}_{i=1}^{n-m}$ the conjugate basis is unique.

{The JW decomposition of type $m$} associated to the choices $I=\Span{z_1,\cdots,z_{n-m}}$, $I'=\Span{x_1,\cdots,x_{n-m}}$, and $W=\Span{a_1,\cdots,a_{2m}}$, where the JW basis is obtained from  $\set{z_{n-m+i},x_{n-m+i}}_{i=1}^{m}$ {using Eq.~(\ref{eq:symplectic_to_jw})}, {is called the canonical JW decomposition.}
Note that $W$ can be identified by $E_m$ once the indices of the basis vectors are shifted down by $n-m$. 

\Lem{\label{lem:JW for W}
{Let $E=I\oplus W\oplus I'$ be a JW decomposition of type $m$ and $\set{a_i}_{i=1}^{2m}$ be a JW basis for $W\cong E_m$.}
For $b\in W$, {write $\aA_b=\set{a_1,\cdots,a_{2t}}\subset W$ and define} $\set{\hat{a}_i}_{i=1}^{2t}$ {by} 
\[
\hat{a}_1 = b \;\;\text{ and }\;\; \hat{a}_i = a_{i},\quad i = 2,\cdots, 2t.
\]
{Let $\set{\hat e_i,\hat f_i}_{i=1}^{2t}$ be the symplectic basis of $W$ associated to $\set{\hat{a}_i}_{i=1}^{2t}$ given by Eq.~(\ref{eq:jw_to_symplectic}).}
Then, there exists a JW decomposition for $W$ of type {$m-t$:}
\[
W = J \oplus \bar W \oplus J'
\]
such that
\begin{itemize}
\item $J=\Span{\hat e_1,\cdots,\hat e_t}$,
\item $J'=\Span{\hat f_1,\cdots,\hat f_t}$,
\item {$\set{b + a_{2t+1},\cdots,b + a_{2m}}$ is a JW basis of {$\bar W \cong E_{m-t}$}.}
\end{itemize}
Moreover, $J\oplus J'=\Span{a_1,\cdots,a_{2t}}$.
}



\Proof{%
  Notice that $\set{\hat{a}_i}_{i=1}^{2t}$ is an anti-commuting set and $a_{1} = \sum_{i=1}^{2t}\hat{a}_i$ since $\sum_{a \in \aA_b}a = b$. 
{Explicitly,  $\hat e_{i}$ and $\hat f_{i}$ are defined as follows:}
\begin{eqnarray}\label{eq:hat e}
e_{1} = {\hat{a}}_{1}\quad \text{and} \quad e_{i} = {\hat{a}}_{2i-1} + \sum_{k=1}^{2i-2} {\hat{a}}_{k},\quad i = 2,\cdots, t,
\end{eqnarray}
{and}
\begin{eqnarray}\label{eq:hat f}
f_{1} = {\hat{a}}_{2}\quad \text{and} \quad f_{i} =  {\hat{a}}_{2i} + \sum_{k=1}^{2i-2} {\hat{a}}_{k},\quad i = 2,\cdots, t.
\end{eqnarray}  
  So, $J$ and $J'$ are conjugate isotropics and $S = J \oplus J' = \Span{\hat{a}_1, \cdots, \hat{a}_{2t}} = \Span{a_1, \cdots, a_{2t}}$ is a symplectic subspace
  with symplectic basis $\set{\hat{e}_i, \hat{f}_i}_{i=1}^t$. 
  Since $\set{a_i}_{i=1}^{2m}$ is a basis of $W$, we have
  $W = S \oplus \Span{a_{2t+1}, \cdots, a_{2m}}$. Since $b \in S$ we have $W = S \oplus \bar{W}$ where $\bar{W} = \Span{b + a_{2t+1}, \cdots, b + a_{2m}}$.
  Lastly, notice that $\set{b+a_{2t+1},\cdots,b+a_{2m}}$ is pairwise anti-commuting since $\cC_b = \set{a_{2t+1},\cdots,a_{2m+1}}$.
  Hence, $\set{b+a_{2t+1},\cdots,b+a_{2m}}$ is a JW basis for $\bar{W}$.
}
} 
  

{Given an isotropic subspace $I = \Span{e_{1},\cdots,e_{n-m}}$ a JW decomposition (Definition~\ref{def:JW-decomposition}) can be obtained by the following (efficient) procedure: To begin, consider a linear map $L:\mathbb{Z}_{2}^{2n}\to \mathbb{Z}_{2}^{n-m}$ such that $\ker(L) = I^{\perp}$. The corresponding $(n-m)\times 2n$ matrix over $\ZZ_{2}$ will also be denoted {by} $L$.%
~We now outline an efficient procedure for obtaining a {JW} decomposition.%
~To construct a basis for $I^{\perp} = I\oplus W$:%


\begin{itemize}
\item Use Gaussian elimination (GE) to solve the set of homogeneous linear equations $Lx = 0_{n-m}$. This yields a set of vectors $\{v_{1},\cdots,v_{n+m}\}$.
\item Use GE once again to identify $2m$ vectors $\{v_{i_{1}},\cdots,v_{i_{2m}}\}$ from $\{e_{1},\cdots,e_{n-m}\}$ that are linearly independent.
\end{itemize}
We now have that $I^{\perp} = \Span{e_{1},\cdots,e_{n-m},v_{i_{1}},\cdots,v_{i_{2m}}}$. {Due to the rank-nullity theorem we have that $E_{n} = \ker(L)\oplus \im{(L^{T})}$ wherein we identify the $(n-m)$-dimensional subspace $I^{\prime} = \im{(L^{T})}$ as the basis conjugate to $I$. In practice we can construct a basis for $I'$ as follows:}

\begin{itemize}
\item Use GE to determine the linearly independent columns $\{\bar f_{1},\cdots, \bar f_{n-m}\}$ of $L$,
\item obtain a basis $\{f_{1},\cdots,f_{n-m}\}\subset I^{\prime}$ according to $f_{i} = L^{T}\bar f_{i}$ {since $L\lvert_{I^{\prime}}$ is injective}.
\end{itemize}

As described in {Eq.~(\ref{eq:jw_to_symplectic}),}
 we can obtain a JW {set} for $W$ by first identifying {a} symplectic basis $W = \Span{ e'_{i},\cdots, e'_{m}, f'_{1},\cdots, f'_{m}}$, { then {constructing $a_i$'s for $i=1,\cdots,2m$ and setting $a_{2m+1}=\sum_{i=1}^{2m} a_i$} (Lemma~\ref{lem:maximal_anticommutings_are_odd}).
This can be achieved efficiently using symplectic Gram-Schmidt orthogonalization (SGSO) \cite{silva2001lectures,koenig2014efficiently}. Moreover, the procedure outlined in Lemma~\ref{lem:conjugate basis} is a special case of SGSO in which the decomposition $W^{\perp} = I\oplus I^{\prime}$ is given, and thus is also efficient. The protocol as a whole is, in turn, efficient as well since the most expensive subroutines are GE and SGSO, both of which are polynomial time algorithms.}

\subsection{{Classification of CNC sets}}

{The original classification} of CNC sets in~\cite{raussendorf2020phase} (see also~\cite{kirby2019contextuality}) was based on graph-theoretic methods. In this section, we provide an alternative proof using symplectic techniques, centered on our notion of JW decomposition (Definition~\ref{def:JW-decomposition}).

{We begin with the observation that anti-commuting sets together with isotropic subspaces can be used to build CNC sets. We begin with the observation that if $\gamma$ is a value assignment on an isotropic subspace $J$ and ${J}$ is contained in a larger isotropic subspace $J'$, then the value assignment can be extended $\tilde \gamma:J'\to \ZZ_2$.} 
Let $I$ be an isotropic subspace of $E_n$ and $\set{a_1, \cdots, a_N}$ be an anti-commuting set in $I^\perp$. Then,
\begin{equation}\label{eq:decomposition}
  \Omega = \bigcup_{k=1}^N I_k
\end{equation}
where $I_k = \Span{a_k, I}$ is a CNC set. {To see this, first observe that the closedness property under commutation is clear. To construct a value assignment start from a value assignment $\gamma_0:I\to \ZZ_2$ and extend it to each $I_k$ to obtain value assignments $\gamma_k:I_k\to \ZZ_2$. Then the family $\set{\gamma_k}_{k=1}^{N}$ of value assignments which match on $I$ defines a value assignment $\gamma:\Omega\to \ZZ_2$.}

{The classification theorem says that the converse of this observation is also true.} 
Let us start by considering a closed subset $\Omega$. We write

\begin{equation}\label{eq:center}
I(\Omega) = \set{a\in \Omega \mid [a,b]=0\;\forall b\in \Omega},
\end{equation}
and call this {isotropic} subspace the {\it center} of $\Omega$. {Assume that $\operatorname{dim}(I(\Omega)) = n - m$. Then we can write $I(\Omega)^\perp = I(\Omega) \oplus W(\Omega)$ where $W(\Omega)$ is a $2m$-dimensional symplectic subspace in {a} {JW} decomposition.
}
When there is no confusion we simply write $I$ and $W$ for the subspaces associated to the closed subset. 
Using the closedness of $\Omega$, we can write it as a union of isotropic subspaces {of the form $\Span{a,I}$ where {$a\in \Omega\setminus I$}. Note {that} the difference is empty {when $\Omega = I$}.}
Also, $\Span{a, I} = \Span{\bar{a}, I}$ for every $a \in \Omega \setminus I$ where $\bar{a}$ is the projection of a onto $W$.
Therefore we can write $\Omega$ as in Eq.~(\ref{eq:decomposition}) where
$a_k \in W \setminus \set{0}$ for all $k$ and $\Span{a_k, I} \neq \Span{a_l, I}$ for $k \neq l$. 
Proving that $\{a_1, \cdots, a_N\}$ forms an anti-commuting set requires additional work, {which we defer to Section~\ref{sec:proof of cnc classification}. The proofs of the following two results can be found there.}

\Thm{\label{thm:classification CNC}
Let $\Omega \subset E_n$ be a CNC subset. {Then there exists an anti-commuting set $\set{a_1, \cdots, a_N} \subset W(\Omega)$ such that}
\begin{equation}\label{eq:Omega N}
\Omega = \bigcup_{k=1}^N \Span{a_k,I{(\Omega)}}
\end{equation} 
where $1 \le N \le 2m + 1$ and $\dim(I{(\Omega)})=n-m$ {for some $1 \le m \le n$}. 
}

\Cor{{The maximal CNC sets are precisely those that have a decomposition as in Eq.~(\ref{eq:Omega N}) with $N=2m+1$.}\label{cor:maximality}}


{
{We call a maximal CNC set $\Omega$ of type $(n,m)$ if $I(\Omega)$ has dimension $n-m$ and $N=2{m}+1$ in Eq.~(\ref{eq:Omega N}).}
{For each such $\Omega$ there exists a symplectic isomorphism $\alpha: E_n \to E_n$ such that
\begin{equation}\label{eq:symp_map_between_n_m_cncs}
  \alpha(I) = \Span{z_1, \cdots, z_{n-m}}, \quad \alpha(I') = \Span{x_1, \cdots, x_{n-m}}, \quad \alpha(\set{a_i}_{i=1}^{2m+1}) = \set{\hat{a}_i}_{i=1}^{2m+1}
\end{equation}     
where $\set{\hat{a}_i}_{i=1}^{2m+1}$ is the JW set 
associated with the symplectic basis $\set{x_{n-m+i}, z_{n-m+i}}_{i=1}^{m}$.
Hence $\Sp_{2n}(\ZZ_2)$ acts transitively on the set of maximal CNC sets of type $(n,m)$. 

The surjective group homomorphism ${\varphi}:\Cl_{n}\to \Sp_{2n}(\ZZ_{2})$ implies there exists a Clifford unitary \(U_\alpha \in \Cl_n\) such that \(\varphi(U_\alpha) = \alpha\).
Moreover, the phases associated with the conjugation action of \(U_\alpha\) on the Pauli operators \(\{T_{v_i}\}_{i=1}^{2n}\) where \(\{v_i\}_{i=1}^{2n}\) is a (not necessarily symplectic) basis of \(E_n\) can be chosen freely\footnote{{To see this}, notice that there is a local Clifford $V \in Cl_1$ such that
\begin{align*}
  V (X) = X, \quad V(Y) = -Y, \quad V(Z) = Z
\end{align*}
and if we apply it to the $n$-th qubit of the $\set{T_{\hat{a}_i}}_{i=1}^{2m+1}$, it only flips the phase of the $T_{\hat{a}_{2m+1}}$. Combining $V$ with $U_{\alpha}$ gives the desired result.}.
Therefore for each maximal CNC set $\Omega$ of type $(n,m)$, there exists a Clifford unitary $U_\alpha$ such that 
\begin{equation}\label{eq:cnc_clifford_stab_factor_decomposition}
  A_\Omega^\gamma = U_\alpha^\dagger \left(\Pi_{\Span{z_1, \cdots, z_{n-m}}}^0 \otimes A_{\tilde{\Omega}}^{0}\right) U_\alpha
\end{equation} 
where $\tilde{\Omega}$ is a maximal CNC of type $(m,m)$ and $0$ refers to the value assignment that assigns the value $0$ to each element in the corresponding set. {Conversely, a $(n{-}m)$-qubit stabilizer state and an $m$-qubit CNC operator can be composed---i.e., tensored as in Eq.~(\ref{eq:cnc_clifford_stab_factor_decomposition})---to
 produce an $n$-qubit CNC operator.}

This lets us to compute the number of phase point operators; see \cite[Theorem 5]{zurel2020hiddenthesis}.} 
The number $N_{n,m}$ of {maximal} CNC {sets} of type $(n,m)$ is given by

\[
N_{n,m} = N_{n-m} \cdot N_{m,m} 
\]
where
\begin{itemize}
\item $N_{n-m}$ is the number of isotropic subspaces of dimension $n-m$:
\[
N_{n-m} = \prod_{k=1}^{n-m} \frac{4^{n-k+1}-1}{2^k-1},
\]
\item $N_{m,m}$ is the number of {maximal} CNC sets of type $(m,m)$, i.e., the number of distinct sets of JW elements:
\[
N_{m,m} = \frac{|\Sp_{2m}(\ZZ_2)|}{(2m+1)!},
\]
where $|\Sp_{2m}(\ZZ_2)|=2^{m^2}\prod_{j=1}^m {(4^j-1)}$. 
\end{itemize}
{Additionally, the number ${M}_{n,m}$ of maximal CNC operators of type $(n,m)$ is given by
\[
  {M}_{n,m} = N_{n,m} \cdot V_{n,m}
\]
where} $V_{n,m}$ is the number of value assignments on a CNC set of type $(n,m)$:
\[
V_{n,m} = 2^{n+m+1}.
\]
 
\subsection{Enlarging to maximal CNC}
\label{sec:Enlarging to maximal CNC}

Given a CNC set 
\[
\Omega  = \bigcup_{k=1}^N I_k,
\]
where $I_k=\Span{a_k,I}$, $\dim(I)=n-m$, and {$N < 2m+1$}, 
there are two ways to enlarge to a maximal CNC {set}: 
\begin{enumerate}
\item Assume that $N=2l+1$ for some {$1 \le l < m$}. For an isotropic subspace $J\subset I^\perp$ of dimension $t=m-l$ {satisfying $I \cap J = 0$}, we can construct a maximal CNC set $\tilde \Omega$ defined by
\[
\tilde \Omega  = \bigcup_{k=1}^{2l+1} \tilde I_k,
\]
where $\tilde I_k=\Span{a_k, \tilde I}$ and {$\tilde I=I \oplus J$}. 
After choosing a basis $J=\set{e_{n-m+1},\cdots,e_{n-m+t}}$ we can also construct the value assignments $\tilde \gamma:\tilde \Omega\to \ZZ_2$. Such value assignments are of the following form: For $s\in \ZZ_2^t$, we define the value assignment
\begin{equation}\label{eq:value assignment extension}
\gamma\ast s(a) =\begin{cases}
\gamma(a) & a\in \Omega \\
    s_i      & a=\hat e_{n-m+i}.
\end{cases}
\end{equation}
Note that on other Pauli elements the values are automatically determined.

\item Next, we provide an enlargement where the center is fixed but the {size of the anti-commuting set is} increased. Complete the set $\set{a_1,\cdots,a_l}$ to a {JW set}. It is only possible when $\set{a_1,\cdots,a_l}$ is not maximal and we can do this by following Algorithms 1 and 2 in \cite{sarkar2019sets}. Let $t=2(m-l)$. Then, for $s\in \ZZ_2^t$, we define  
\begin{equation}\label{eq:value assignment extension2}
\gamma\ast s(a) =\begin{cases}
\gamma(a) & a\in \Omega \\
    s_i      & a=a_{2l+1+i}.
\end{cases}
\end{equation}
 
\end{enumerate}

\section{The update rules}
\label{sec:the update rules}

In this section, we describe the update rules for CNC operators, as defined in Eq.~(\ref{eq:CNC operator}). Under a Pauli measurement $T_b$, a CNC operator $A_\Omega^\gamma$ is updated to a probabilistic mixture of CNC operators.
For the update rules {of $A_\Omega^\gamma$}, we need the following notation:
\begin{itemize}
\item Given $b\in {\Omega}$, we write ${\gamma + }[b,\cdot]:\Omega\to \ZZ_2$ for the value assignment defined by $a\mapsto {\gamma(a) + }[b,a]$. 
\item {Given $b \in E_n\setminus \Omega$}, we write $\Omega(b)$ for the closed set defined by
\[
\Omega(b) = \Span{b}+ \Span{b}^\perp \cap \Omega
\]  
{and for $r \in \mathbb{Z}_2$ }we write $\gamma \ast r : \Omega(b)\to \ZZ_2$ for the value assignment defined by
\[
\gamma \ast r (a) =
\begin{cases}
\gamma(a) & a\in \Span{b}^\perp \cap \Omega \\
\gamma(b+a)+r+ \beta(b,a) & a\in b+\Span{b}^\perp \cap \Omega.
\end{cases}
\]
Alternatively, $\gamma\ast r$ can be described as the unique value assignment determined by $\gamma\ast r(b)=r$ and $\gamma\ast r(a)=\gamma(a)$ for $a\in \Span{b}^\perp \cap \Omega$.
\end{itemize}

If outcome $r\in \ZZ_2$ is observed the update of a CNC operator under measurement of a Pauli operator $T_b$ is given by the following formula \cite{raussendorf2020phase}:
\begin{equation}\label{eq:updates}
\Pi_b^r A_\Omega^\gamma \Pi_b^r = \begin{cases}
\displaystyle\delta_{r,\gamma(b)}\frac{A_{\Omega}^{\gamma}+A_{\Omega}^{\gamma + [b,\cdot]}}{2} & b\in \Omega \\[6pt]
\displaystyle\frac{1}{2}A_{\Omega(b)}^{\gamma \ast r} & b\not\in \Omega.
\end{cases}
\end{equation}
So, there are two cases for the update rule determined by whether $b\in \Omega$, or not.

This formula generalizes the well-known update of a stabilizer projector under a Pauli measurement.
 Let $I$ be an isotropic subspace of $E_n$ and $s$ be a value assignment on $I$. Then
  \[ \Pi_b^r\Pi_I^s\Pi_b^r = 
  \begin{cases} 
     \delta_{r,s(b)} \Pi_I^s &  b \in I \\[6pt]
      \displaystyle\frac{|\Span{b}^\perp \cap I|}{|I|} \Pi_{I(b)}^{s\,\!*\,\!r}&   b \notin I.
  \end{cases}
  \]

Next, we refine the update rule given in Eq.~(\ref{eq:updates}) as a preparation for the tableau representation in Section \ref{sec:tableau representation}.
We have the following containments of subspaces
\[
I \subset I_k \subset \Omega \subset I^\perp \subset E_n
\]
where $I_k=\Span{a_k,I}$.
The refined update rules will break into four cases:
\begin{itemize}
\item {\bf Case I:} $b\in I$,
\item {\bf Case II:} $b\in I_k\setminus I$,
\item {\bf Case III:} $b\in I^\perp \setminus \Omega$,
\item {\bf Case IV:} $b\notin I^\perp$.
\end{itemize}
{For our simulation, we consider only maximal CNCs. So, assume that $\Omega$ is a maximal CNC set of type $(n,m)$. Then, Theorem~\ref{thm:classification CNC} implies that $\set{a_1, \cdots, a_{2m+1}}$ is a JW set in $W(\Omega) = \Span{a_1, \cdots, a_{2m}}$.
Using this, we can distinguish the first three cases} by the number of JW elements that anti-commute with $b$. {Note that $|\aA_b|$ is even by Corollary~\ref{cor:commuting vs anticommuting}.}

\Lem{\label{lem:three cases}  
For $b\in I^\perp$, let us define the integer
\[
N_b = {\frac{|\aA_b|}{2}}.
\]
Then, we have three cases:
\begin{enumerate}
\item $b\in I$ if and only if $N_b = 0$.

\item  $b\in I_k{\setminus}I$ if and only if $N_b=m$. 

\item  $b\in I^\perp{\setminus}\Omega$ if and only if  $1\leq N_b\leq m-1$.  
\end{enumerate}
}
\Proof{
Part (1): {Note that $\Span{\Omega} = I^\perp$. Then we can see that }$b\in I$ if and only if all JW elements commute with $b$, i.e., $|\aA_b|=0$. 
Part (2): {It is clear that if $b \in I_k\setminus I$ then $N_b = m$. Now, Let $N_b = m$ and $\bar{b}$ denote the projection of $b$ onto $W$. Then $\bar{b} = a_k$ for some $k$ by {Lemma~\ref{lem:JW basis}} which implies $b \in I_k\setminus I$.}
Part (3) holds in the remaining case, i.e., $b$ does not belong to $\Omega$ if and only if  $|\aA_b|\neq 0, 2m$.
}

Cases I and II fall under the case $b\in \Omega$ of Eq.~(\ref{eq:updates}) and the CNC set remains the same under update, {although the value assignment may not}. Cases III and IV require more work. Given a maximal CNC set $\Omega$, we consider the associated JW decomposition $(I,I',W)$ where 
\begin{align*}
I&=\Span{e_1,\cdots,e_{n-m}}\\
I'&=\Span{f_1,\cdots,f_{n-m}}\\
W&=\Span{a_1,\cdots,a_{2m}}.
\end{align*} 
The bases $\set{e_i}_{i=1}^{n-m}$ and $\set{f_i}_{i=1}^{n-m}$ are chosen to be conjugate (Lemma \ref{lem:conjugate basis}).

\paragraph{Case III} 
Let $b\in I^\perp \setminus \Omega$ and $\Omega$ be a maximal CNC set.
We begin by reindexing the JW elements so that $\aA_b=\set{a_1,\cdots,a_{2t}}$, where $1\leq t\leq m-1$. 
Then, observe that $I(b)=\Span{e_1,\cdots,e_{n-m}, b}$.
When we compute $\Omega(b)$ we immediately find that 
\begin{equation}\label{eq:Omega(b) CaseIII}
\Omega(b) = \bigcup_{k=2t+1}^{2m+1} I_k(b)
\end{equation}
where $I_k(b) = \Span{a_k,I(b)}$. A conjugate basis can be given so that $I(b)'=\set{f_1,\cdots,f_{n-m},a}$ for some $a\in \aA_b$.
{The CNC set $\Omega(b)$ is maximal if and only if $t =1$. If $t > 1$, then w}e construct a maximal CNC by enlarging its center (see Section \ref{sec:Enlarging to maximal CNC}).
  The maximal CNC set  
\begin{equation}\label{eq:tilde Omega(b) CaseIII}
\tilde \Omega(b) = \bigcup_{k=2t+1}^{2m+1} \tilde I_k(b)
\end{equation}
containing $\Omega(b)$ and the corresponding conjugate bases
are described as follows:  
\begin{itemize}
\item $\tilde I(b)' = \Span{\bar{f}_1, \cdots, \bar{f}_{n-m+t}}$ where
\begin{equation}\label{eq:caseIII f}
\bar f_i = \begin{cases}
\hat f_{i-(n-m)} &  n-m+1 \leq i \leq n-m+t \\
   f_i & \text{otherwise}  
  \end{cases} 
\end{equation}
and
{%
\begin{equation}\label{CaseIII hat f}
\hat f_{i} = \begin{cases}
{a_{2}} & i=1 \\
a_{2i} + \sum_{k=1}^{i-1} (\hat e_{k}+\hat f_{k}) & 2\leq i\leq t,
\end{cases}
\end{equation}
}%

\item $\tilde I(b) = \Span{\bar{e}_1, \cdots, \bar{e}_{n-m+t}}$ where
\begin{equation}\label{eq:caseIII e}
\bar e_i = \begin{cases}
\hat e_{i-(n-m)} &  n-m+1 \leq i \leq n-m+t \\
   e_i & \text{otherwise}  
  \end{cases} 
\end{equation}
and%
{%
\begin{equation}\label{CaseIII hat e}
\hat e_{i} = \begin{cases}
{\bar{b}}  & i=1 \\
a_{2i-1} + \sum_{k=1}^{i-1} (\hat e_{k}+\hat f_{k}) & 2\leq i\leq t,
\end{cases}
\end{equation}
{where $\bar{b}$ is the projection of $b$ onto $W$ found using $\bar{b} = \sum_{a \in \aA_b} a =  \sum_{a' \in \cC_b} a'$ from Corollary~\ref{cor:commuting vs anticommuting} and the fact that $[b, a_i] = 0$ if and only if $[\bar{b}, a_i]$ for all $i$.}%
}%

\item $\tilde I_k(b) = \Span{\bar a_k,\tilde I(b)}$ where
\begin{equation}\label{eq:caseIII a}
\bar a_k =  {\bar{b}} + a_k,  
\end{equation}
 and $k=2t+1,\cdots,2m+1$.
\end{itemize}
{This construction comes directly from Lemma \ref{lem:JW for W} {and the corresponding equations Eq.~(\ref{eq:hat e}) and Eq.~(\ref{eq:hat f})}.}%


Given the description of $\tilde \Omega(b)$ in Eq.~(\ref{eq:tilde Omega(b) CaseIII}) and its subset $\Omega(b)$ in Eq.~(\ref{eq:Omega(b) CaseIII}), a value assignment $\gamma'$ on $\Omega(b)$ can be extended to  $\tilde \Omega(b)$ by choosing values $s_i \in \ZZ_2$ for the basis vectors $\hat e_i$ where $2\leq i\leq t$. This gives the value assignment $\gamma'\ast s$ described by the construction in Eq.~(\ref{eq:value assignment extension}). 
Then, we can write the Case III update rule as follows:
\begin{equation}\label{eq:Case III maximal version}
\Pi_b^r A_{\Omega}^\gamma \Pi_b^r = 
\frac{1}{2} \frac{\sum_{s\in \ZZ_2^{t-1}} A_{\tilde \Omega(b)}^{(\gamma\ast r)\ast s} }{2^{t-1}}.
\end{equation}

\paragraph{Case IV} Now, we turn to the final update rule. Let $b\notin I^\perp$ and $\Omega$ be a maximal CNC set. 
The maximal CNC set  
\[
 \Omega(b) = \bigcup_{k=1}^{2m+1}  I_k(b)
\] 
with the corresponding conjugate bases are described as follows: Let $\lambda$ be the minimum of the set $\{1\leq i \leq n-m \mid [e_i, b] = 1 \}$ of indices.
\begin{itemize}
\item $I(b)' = \Span{\bar{f}_1, \cdots, \bar{f}_{n-m}}$ where
\begin{equation}\label{eq:caseIV f} 
 \bar{f}_{i} = \begin{cases}
  e_\lambda & i=\lambda\\
    f_{i} + e_{\lambda} & \text{if } [f_i, b] = 1\\
     f_{i} & \text{otherwise.}\\
  \end{cases}
\end{equation}
\item $I(b) = \Span{\bar{e}_1, \cdots, \bar{e}_{n-m}}$ where
\begin{equation}\label{eq:caseIV e}
\bar e_i = \begin{cases}
b & i=\lambda \\
  e_i + e_\lambda & \text{if } [e_i,b]=1 \text{ and } i \neq \lambda \\
   e_i & \text{otherwise.}  
  \end{cases} 
\end{equation}
\item $I_k(b) = \Span{\bar a_k, I(b)}$ where 
\begin{equation}\label{eq:caseIV a}
\bar{a}_k = \begin{cases} 
  a_k + e_\lambda & \text{if } [a_k, b] = 1\\
a_k & \text{if } [a_k, b] = 0 \end{cases}
\end{equation}
and  $k=1,\cdots,2m+1$.
\end{itemize}
This construction relies on the basis transformation described in Eq.~(\ref{eq:modified basis}).

{We conclude this section with a remark on Case III. If the CNC set after the measurement is not maximal, we apply the first enlargement procedure described in Section \ref{sec:Enlarging to maximal CNC}. As we will demonstrate next, this is the only feasible method for achieving a maximal CNC, since the alternative approach---enlarging the JW elements---is not possible.}
Let $\Omega \subset E_n$ be a maximal CNC set with $\on{dim}(I) = n-m$ and JW elements $\set{a_1, \cdots, a_{2m+1}}$. 
Then, after measuring $b \in I^\perp \setminus \Omega$ (reindex JW elements such that $\mathcal{A}_b = \set{a_1, \cdots, a_{2t}}$) we have the CNC set 
\[
    \Omega(b) = \bigcup_{a \in \mathcal{C}_b} \Span{I(b), a} = \bigcup_{k = 2t+1}^{2m+1} \Span{I(b), a_k}
\]
where $I(b) = \Span{I, b}$ is the center of $\Omega(b)$. 
{Let} $\tilde{W}$ {be a symplectic subspace} such that $I(b)^\perp = I(b) \oplus \tilde{W}$.
Then, we claim that any choice of an anticommuting set $\set{\hat{a}_1, \cdots, \hat{a}_{2(m-t)+1}} \subset \tilde{W} \cap \Omega$ such that
\[
    \Omega(b) = \bigcup_{k=1}^{2(m-t)+1} \Span{I(b), \hat{a}_k}
\]
is maximal. To see this, first note that 
\[ 
    \sum_{a \in \mathcal{C}_b} a = \sum_{k=2t+1}^{2m+1} a_k = b \in I(b).
\]
Additionally, without loss of generality we have that $\hat{a}_k = {c_k} + a_{2t+k}$ for some $c_k \in I(b)$ {where $k =1, \cdots, 2(m-t)+1$}.
This gives
\[ 
    \sum_{k = 1}^{2(m-t)+1} \hat{a}_k \in I(b)
\]
and $\tilde{W} \cap I(b) = {0}$ implies
\[ 
    \sum_{k = 1}^{2(m-t)+1} \hat{a}_k = 0.
\]
{Therefore}, {by Proposition \ref{pro:maximal_anticommuting}} $\set{\hat{a}_1, \cdots, \hat{a}_{2(m-t)+1}}$ is a maximal anticommuting set in $\tilde{W}$.

\section{The tableau representation}
\label{sec:tableau representation}

In this section, we introduce our phase space tableau, {{also} referred to as the CNC tableau,} and describe its update rules. These rules are implemented in \cite{CNCSim}; see Fig.~(\ref{fig:fig}).

\subsection{The {phase space} tableau}
 
{As shown in   Theorem~\ref{thm:classification CNC}}, a maximal CNC set {$\Omega$} has the form
\[
\Omega = \bigcup_{k=1}^{2m+1} I_{k}
\]  
where $I{ = I(\Omega)}$ is the center{,} $I_k=\Span{a_k,I}$, {and $\set{a_1, \cdots, a_{2m+1}}$ is a $JW$ set in $W = W(\Omega)$}.
{Here we chose} a JW decomposition $E_n=I\oplus W\oplus I'$ associated to the CNC set {$\Omega$} where $I'$ is a conjugate isotropic.  
Our goal is to construct a tableau representation for $\Omega$ and a value assignment $\gamma$ on this CNC set given by a matrix whose rows encode the conjugate basis vectors and the JW elements.

For an element $a\in E_n$ expressed as $a=\sum_{i=1}^{n} (\alpha_i {x}_i + \beta_i {z}_i)$ we write
\[
R(a) = {(\alpha_1, \cdots ,\alpha_n, \beta_1, \cdots ,\beta_n )}
\] 
for the row vector consisting of the coefficients. For a matrix $M$ we write $M_{i,\ast}$ for its $i$-th row vector.

\Def{\label{def:tableau}
Given a JW decomposition $E_n=I\oplus W\oplus I'$ of type $m$ together with a basis $\set{e_1,\cdots, e_{n-m}}$ for $I$, a conjugate basis $\set{f_1,\cdots,f_{n-m}}$ for $I'$, and an associated JW {set} $\set{a_1,\cdots,a_{2m+1}}$ for $W$, we define a {$(2n)\times(2n+1)$} matrix 
\[
\tT_{i,\ast} = \begin{cases}
R(f_i) & 1\leq i\leq n-m \\
R(e_{i-(n-m)}) & n-m+1 \leq i\leq 2(n-m) \\
R(a_{i-2(n-m)}) & 2(n-m)+1 \leq i\leq 2n+1
\end{cases} 
\]
called a {\it tableau representation {of type $m$}} for the JW decomposition {of type $m$. We denote the $i$-th row of the tableau by $R_i = \tT_{i,\ast}$.}
}

We separate the matrix $\tT=\tT_\Omega$ into three parts:
\begin{itemize}
\item Destabilizer part $\tT|_D $ defined by
\[
(\tT|_D)_{i,\ast} = R_i
\]
where $1\leq i\leq n-m$.

\item Stabilizer part $\tT|_S $ defined by
\[
(\tT|_S)_{i,\ast} = R_{n-m+i}
\]
where $1\leq i\leq n-m$.

\item Jordan-Wigner part $\tT|_{JW} $ defined by
\[
(\tT|_{JW})_{i,\ast} = R_{2(n-m)+i}
\]
where $1\leq i\leq 2m+1$.

\end{itemize}

Next, we extend the tableau to include a value assignment $\gamma:\Omega\to \ZZ_2$. The values of $\gamma$ on the basis of $I$ and the JW elements are recorded in the last column. 
{In destabilizer part, we extend the value assignment with zero values for convenience, even though these values are irrelevant to the simulation algorithm.}

\Def{\label{def:tableau CNC} 
A \emph{tableau representation for a {maximal} CNC pair} {$(\Omega,\gamma)$} is the matrix $\tT_\Omega^\gamma$ obtained from  a tableau representation of the associated JW decomposition by appending a column consisting of the value assignments to the basis elements, that is, $(\tT_\Omega^\gamma)_{i,{j}}=\tT_{i,{j}}$ for {$1 \leq j \leq 2n$,} $1\leq i\leq 2n+1$, and 
\[
(\tT_\Omega^\gamma)_{i,2n+1} =\begin{cases} 
0  &  1\leq i\leq n-m\\
\gamma(e_{i-(n-m)}) & n-m+1 \leq i\leq 2(n-m)\\
\gamma(a_{i-2(n-m)}) &  2(n-m)+1 \leq i\leq 2n{+1}.\\
\end{cases}
\]
{If the maximal CNC has type $(n,m)$, we say that the tableau $\tT_\Omega^\gamma$ is of type $m$.}
}
{A simple algorithm to construct the tableau for a given maximal CNC operator $A_\Omega^\gamma$ is provided in Algorithm~\ref{alg:tableau_from_CNC}.}
For notational simplicity, we write {$\tilde \tT=\tT_\Omega^\gamma$.} The rows of this matrix are denoted by {$\tilde R_i$} and they have the form
\[
{\tilde R_i = (R_i , s_i)}
\]
where $R_i$ is the $i$-th row of $\tT$ and $s_i$ is the value assigned by $\gamma$ to the corresponding Pauli element. Similarly, we partition $\tilde\tT$ into destabilizer, stabilizer, and JW parts:%
\[%
{\tilde \tT|_D,\;\;\; \tilde \tT|_S, \;\;\; \tilde \tT|_{JW}.}
\]
 
\Ex{
{\rm
{Let $I=\Span{z_1,\cdots,z_{n-m}}$, $I'=\set{x_1,\cdots,x_{n-m}}$, and $\set{{a_{1},\cdots,a_{2m+1}}}$ denote the corresponding JW {set}.}
We wish to depict the corresponding tableau. For convenience let us group together even and odd elements $a_{2k}$ and ${a}_{2k-1}$, respectively. We have
\[
\left [\begin{array}{c c|c c c}
I_{n- m} & ~~ & ~~ & ~~ & ~~\\
~~ & ~~ & I_{n- m} & ~~ & ~~\\
~~ & A & ~~ & B & ~~\\
~~ & B & ~~ & A & ~~\\
~~ & 1_{m} & ~~ &  1_{m} &~~\\

\end{array}\right ]
\]
Here $ 1_{m}$ is a row vector of $m$ ones, $I_{n- m}$ is an $(n- m)\times (n-m)$ identity matrix, and empty spaces represent matrices of all zeros of the appropriate dimensions. $A$ and $B$ are $m\times m$ lower triangular and strictly lower triangular matrices, respectively, satisfying
\begin{eqnarray}
A_{ij} =
\begin{cases}%
1\quad i\leq j\\
0\quad \text{otherwise}
\end{cases}%
\quad\text{and}\quad
B_{ij} =
\begin{cases}%
1\quad i < j\\
0\quad \text{otherwise.}
\end{cases}\notag%
\end{eqnarray}
}}

{
\Ex{\label{ex:composition}
{\rm Consider CNC tableaus $\tilde \tT$ and $\tilde \tT^{\,\prime}$ of type $(n,m)$ and $(n',0)$, respectively. These tableaus are of the form
\[
\begin{bmatrix} D & ~ \\S & \gamma_{S}\\W & \gamma_{W} \end{bmatrix}\quad \text{and}\quad %
\begin{bmatrix} D' & ~\\S' & \gamma_{S'}\\~ & ~ \end{bmatrix},
\]
where the $\gamma$'s indicate phase bits.%
~Following 
{Eq.~(\ref{eq:cnc_clifford_stab_factor_decomposition})}
their composition{, i.e., tensor product,} is given by 
{the}
tableau
\[
\begin{bmatrix} D & ~ & ~\\ ~ & D' & ~\\S & ~ & \gamma_{S}\\ ~ & S' & \gamma_{S'} \\ W & ~ & \gamma_{W}\end{bmatrix}.
\]
}
}
}




\subsection{Updating the {phase space} tableau}

Just as in the stabilizer tableau approach, a CNC tableau can be updated either through the deterministic action of Clifford unitaries or the probabilistic action of Pauli measurements. The rules for updating 
under Clifford action is exactly the same as in the stabilizer approach. The $n$-qubit Clifford group $\Cl_{n}$ is generated by the Hadamard and phase gates $H_{i}$ and $S_{j}$ that act nontrivially only on a single qubit, where
\begin{eqnarray}
H = \frac{1}{\sqrt{2}}\begin{bmatrix} 1 & 1\\ 1 & -1\end{bmatrix}\quad \text{and}\quad
S = \begin{bmatrix} 1 & 0\\ 0 & i\end{bmatrix},\notag
\end{eqnarray}
together with the $2$-qubit CNOT entangling gate $CX_{ct} = \ket{0}\bra{0}_{c}\otimes I_{t}+ \ket{1}\bra{1}_{t}\otimes X_{t}$, where $c$ and $ t$ are the control and target qubits. 
Recall that there is a surjective group homomorphism ${\varphi}:\Cl_{n}\to \Sp_{2n}(\ZZ_{2})$. By surjectivity the images of the Clifford generators by $h_{i} := {\varphi}(H_{i})$, $s_{j} := {\varphi}(S_{j})$, and $cx_{ct} := {\varphi}(CX_{ct})$ themselves generate the symplectic group $\Sp_{2n}(\ZZ_{2})$.%

The rules for updating 
under the action of $h_{i}$, $s_{j}$, or $cx_{ct}$ (together with keeping track of the phase bit) are the same as those 
given in \cite{aaronson2004improved}. Supposing a circuit consists of $L$ Clifford gates, these updates can be performed in $O(nL)$ time.

{Next, we turn to the update rules for the Pauli measurements. Given $a\in E_n$ and $s\in \ZZ_2$, consider the row vector
\[
\tilde R(a) = ( R(a) ,s ),
\] 
obtained by appending to $R(a)$ the value $s$ as the last column entry. 
Given another element $b\in E_n$ and value $r\in \ZZ_2$, we can compute the commutator and the $\beta$ function (see Eq.~(\ref{eq:beta})) as follows:
\[
[a,b] = R(a) \omega R(b)^T,\quad \text{where}\quad%
\omega = \begin{pmatrix} 0 & I\\ I & 0 \end{pmatrix},
\]
and%
{%
\begin{eqnarray}
\beta(a,b)%
&=& \frac{1}{2}\left [%
\sum_{i=1}^n \left( R(a)_i R(a)_{i+n} + R(b)_i R(b)_{i+n} \right )\right. \notag\\%
&+&\left .2\sum_{i=1}^n \left (R(a)_{i+n} R(b)_{i}\right ) -  \sum_{i=1}^n (R(a)\oplus R(b))_{i}( R(a)\oplus R(b))_{i+n}  )\right ].\notag
\end{eqnarray}
}

We define {the} \emph{{row} product operation} between two rows:
\begin{equation}\label{eq:row product}
\tilde R(a) \ast  \tilde R(b) = (R(a)+R(b) , s+r+\beta(a,b) ).
\end{equation}
{Notice that the computation of $\beta(a,b)$ requires $O(n)$ steps so that the row product itself also has $O(n)$ complexity in total.}
}

{Our first task is to identify which of the measurement cases from Section~\ref{sec:the update rules} apply.}

{%
\begin{lem}\label{lem:case-check-complexity}
For a maximal CNC set $\Omega$ and {$b \in E_n$}, checking which of the cases (i) $b\in I$, (ii) $b\in I_{k}/I$, (iii), $b\in I^{\perp}/\Omega$, or (iv) $b\notin I^{\perp}$ occurs can be performed in $O(n^{2})$ time. 
\end{lem}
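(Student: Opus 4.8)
The plan is to translate the geometric conditions (i)--(iv) into conditions on the tableau rows and on the commutators of $b$ with the stored basis vectors, and to check that each such computation costs at most $O(n^2)$ time. Recall that the tableau $\tilde\tT$ has $2n+1$ rows, each of length $O(n)$, encoding the conjugate basis $\set{f_i}$, the center basis $\set{e_i}$, and the JW set $\set{a_1,\dots,a_{2m+1}}$. Given $b\in E_n$ we compute the $2n+1$ symplectic products $[b,R_i]=R(b)\,\omega\,\tT_{i,\ast}^T$; each is a single inner product of two vectors of length $2n$, so the whole collection is computed in $O(n^2)$ time. These $2n+1$ bits are all the information we need.

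First I would dispose of Case IV: by definition $b\notin I^\perp$ if and only if $[b,e_i]=1$ for some $i$ with $1\le i\le n-m$, i.e. if and only if at least one of the $n-m$ ``center'' commutators is nonzero. This is an $O(n)$ scan over the already-computed bits. If all center commutators vanish, then $b\in I^\perp$ and we are in one of Cases I--III; to distinguish them I invoke Lemma~\ref{lem:three cases}, which says the relevant invariant is $N_b = |\aA_b|/2$, where $\aA_b$ is the set of JW elements anti-commuting with $b$. But $\aA_b$ is read off directly from the JW part of the commutator vector: count how many of $[b,a_1],\dots,[b,a_{2m}]$ equal $1$ (an $O(m)=O(n)$ scan). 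Then Case I is $N_b=0$, Case II is $N_b=m$, and Case III is $1\le N_b\le m-1$. Since we have already computed all commutators in $O(n^2)$ time and each classification step is an $O(n)$ pass over the stored bits, the total cost is $O(n^2)$.

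I should be a little careful that Lemma~\ref{lem:three cases} applies as stated: it presumes $b\in I^\perp$ and $\Omega$ maximal (so that $\set{a_1,\dots,a_{2m+1}}$ is a genuine JW set in $W(\Omega)$, by Theorem~\ref{thm:classification CNC}), which is exactly the situation after Case IV has been ruled out, and the hypothesis of the present lemma already fixes $\Omega$ maximal. One subtle point worth a sentence: $[b,a_{2n+1}]$ (the commutator with the ``extra'' JW element) need not be computed separately, since $a_{2m+1}=\sum_{i=1}^{2m}a_i$ and its commutator is determined; equivalently, Corollary~\ref{cor:commuting vs anticommuting} guarantees $|\aA_b|$ is even, so $N_b$ is a well-defined integer. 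I do not expect a genuine obstacle here — the content is essentially bookkeeping — but the one place to be precise is the reduction of the abstract membership tests ($b\in I$, $b\in I_k\setminus I$, $b\in I^\perp\setminus\Omega$) to the commutator-count criteria of Lemma~\ref{lem:three cases}, and the observation that computing the $2n+1$ symplectic products dominates the running time at $O(n^2)$ while every subsequent decision is $O(n)$.
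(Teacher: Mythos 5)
Your proposal is correct and follows essentially the same route as the paper: compute the symplectic products of $R(b)$ with all tableau rows via the matrix product $\tT_{\Omega}\,\omega\,R(b)^{T}$ in $O(n^{2})$ time, rule out Case IV by checking anti-commutation with the stabilizer (center) rows, and then separate Cases I--III by counting the anti-commuting JW rows as in Lemma~\ref{lem:three cases}. Apart from the harmless index typo $[b,a_{2n+1}]$ for $[b,a_{2m+1}]$, there is nothing to fix.
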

}

\begin{proof} 
To determine which update rule to apply for {$b \in E_n$} we must compute the symplectic {form} of $R(b)$ with each row of $\tT_{\Omega}$. We will mainly use Lemma \ref{lem:three cases} to distinguish the cases.
It suffices to perform the matrix multiplication {$\tT_{\Omega}(b) = \tT_{\Omega}\,\omega\,R(b)^{T}$}, which can be done in $O(n^{2})$ time. Once we have computed the symplectic {form}s, 
we observe that a Pauli element $b\in \Omega$ must satisfy either (i) $b\in I$ or (ii) $b \in I_k\setminus I$ for some unique $k=1,\cdots,2m+1$. The former occurs if and only if $b$ commutes with all rows $n-m{+1}$ to $2n+1$, whereas the latter occurs if and only if $b$ commutes with rows $n- m+1$ to $2(n- m)$ plus one additional row in $2(n-m) +1$ to $2n+1$.%
~On the other hand, if $b {\notin \Omega}$ then 
we have either (iii) $b\in I^{\perp}\setminus \Omega$ or (iv) $b\notin I^{\perp}$. The latter happens if and only if $b$ anti-commutes with any element in rows $n- m+1$ to $2(n-m)$, whereas the former occurs if $b$ commutes with {all rows $n-m$ to $2(n-m)$} but anti-commutes with $2t$ rows in $2(n-m)+1$ to $2n+1$.
\end{proof}

{We are now ready to describe the update rules for the tableau representation using the results of Section \ref{sec:the update rules}.  Consider a maximal CNC set $\Omega$ and suppose we perform a measurement of a Pauli operator{\footnote{{If we measure $-T_b$ we flip the outcomes $0 \leftrightarrow 1$.}}} $T_{b}$ with outcome $r_b$. 
{Let $\bar{b}$ be the projection of $b$ onto $W$. Then we define $\tilde{R}(b) = (R(b), r_b)$ and $\tilde{R}(\bar{b}) = (R(\bar{b}), r_b)$.}
\begin{itemize}
\item \textbf{Case I}. $b\in I$: Deterministic outcome $r_{b}=\gamma(b)$ and deterministic tableau update $\tT_{\Omega}^{\,\gamma}\mapsto \tT_{\Omega}^{\,\gamma}$, i.e., tableau remains untouched.

\item \textbf{Case II}. $b\in I_{k}\setminus I$:  Deterministic outcome $r_{b}=\gamma(b)$ and probabilistic tableau update $\tT_{\Omega}^{\,\gamma}\mapsto \tT_{\Omega}^{\,\gamma}$ or $\tT_{\Omega}^{\,\gamma}\mapsto \tT_{\Omega}^{\,\gamma+[b,\cdot]}$ with equal probability.

\item \textbf{Case {III}}. $b\in I^{\perp}\setminus \Omega$: Sample $r_b\in \ZZ_2$ uniformly as the outcome. Sample ${s = (s_{2},\cdots,s_{t})}\in \ZZ_2^{t-1}$ uniformly at random to update  the tableau $\tT_{\Omega}^{\,\gamma}\mapsto \tT_{\tilde \Omega(b)}^{\,(\gamma \ast r_b)\ast s}$ as follows: First, order the rows of $\tT_{\Omega}^\gamma|_{JW}$ in  such a way that the new rows $\tilde R'_{1},\cdots,\tilde R'_{2m+1}$ satisfy
\[
R'_i \omega R(b)^T = \begin{cases}
1 & 1 \leq i \leq 2t \\
0 & \text{otherwise.}
\end{cases}
\]
{%
We make the assignment%
$$ R'_{1} \gets \sum_{i={2t+1}}^{{2m+1}} R'_{i},$$
which computes $R(\bar b )$ {using Corollary~\ref{cor:commuting vs anticommuting}} that we use in place of $R(b)$.
}%
Then, the new tableau {has type $m-t$} and is given by {using Lemma~\ref{lem:JW for W} as}
\begin{align*}
(\tT_{\tilde\Omega(b)}^{(\gamma \ast r_b)\ast s}|_D)_{i,\ast} &=\begin{cases}
( R'_{2},0) & i=n-m+1\\ 
{( R'_{2k} + \sum_{j=1}^{2k-2}R'_{j},0)} &  i = n-m+k,~~ k = 2,\cdots,t \\
(\tT_\Omega^\gamma|_D)_{i,\ast} & \text{otherwise}
\end{cases}\\
(\tT_{\tilde\Omega(b)}^{(\gamma \ast r_b)\ast s}|_S)_{i,\ast} &=\begin{cases}
(R'_{1},r_{b}) & i=n-m+1 \\
{( R'_{2k-1} + \sum_{j=1}^{2k-2}R'_{j},s_{k})} &  i = n-m+k,~~ k = 2,\cdots,t \\
(\tT_\Omega^\gamma|_S)_{i,\ast} & \text{otherwise}
\end{cases}\\
(\tT_{\tilde\Omega(b)}^{(\gamma \ast r_b)\ast s}|_{JW})_{i,\ast} &=
\tilde R'_{2t+i} \ast \tilde R({\bar{b}})  \;\; \text{ where } 1\leq i\leq 2(m-t){+1}.
\end{align*}

\item \textbf{Case IV}. $b\notin I^{\perp}$: Sample $r_b\in \ZZ_2$ uniformly as the outcome and update  the tableau $\tT_{\Omega}^{\,\gamma}\mapsto \tT_{\Omega(b)}^{\,\gamma\ast r_{b}}$ by as follows: Let $\lambda$ be the minimum of the indices satisfying
\[
(\tT_\Omega^\gamma|_S)_{i,\ast} \omega R(b){^T} =1.
\] 
Then,
\[
(\tT_{\Omega(b)}^{\gamma\ast r_b})_{i,\ast} = \begin{cases}
\tilde R_{\lambda+n-m} & i=\lambda \\
\tilde R(b) & i=\lambda+n-m \\
\tilde R_i \ast \tilde R_\lambda & R_i \omega R(b)^T =1 \text{ and } i\neq \lambda, \lambda +n-m\\
\tilde R_i & \text{otherwise.}
\end{cases}
\]

\end{itemize}
}
{%
Let us discuss how these updates are implemented in practice.%
}
\begin{itemize}
{%
\item \textbf{Case I}: The outcome $r_{b}$ is deterministic and can be computed directly from ${\tT}_{\Omega}(b)$ using the destabilizer entries in the first $n-m$ rows. Similar to the stabilizer tableau \cite{aaronson2004improved}, we have that%
$$R(b) = \sum_{i=1}^{n-m}[R(b),R_{i}]R_{i+n-m}.$$%
Introduce an additional row of scratch space and compute $\tilde R(b) = \tilde R_{i_{1}+n-m}\ast \tilde R_{i_{2}+n-m}\ast \cdots \ast \tilde R_{i_{K}+n-m} $, where $i_{k}\in \{1,\cdots, n-m\}$ is such that $[R(b),R_{i_{k}}]$ is nonzero. The deterministic outcome $r_{b} = \gamma(b)$ is the phase bit of $\tilde R(b)$.

\item \textbf{Case II}: This follows similarly to Case I with the only difference that%
$$R(b) = R_{j}+\sum_{i=1}^{n-m}[R(b),R_{i}]R_{i+n-m}$$%
for some unique $j\in \{2(n-m)+1,2n+1\}$ such that $[R(b),R_{j}]=0$. From this $\tilde R(b)$ and $r_{b}=\gamma(b)$ can be computed.

\item \textbf{Case III}: This case is somewhat more involved. Detailed pseudocode can be found in Algorithm~\ref{alg:case-iii}. Recall first that $R(b)$ anti-commutes with $2t$ $(1\leq t\leq m-1)$ rows in $\tT|_{JW}$. Letting $K = 2(n-m)$, these $2t$ rows are moved to rows $K+1,\cdots, K+2t$. Next, we make the assignment%
$$ R_{K+1} \gets \sum_{i=K+2t+1}^{2n+1}R_{i}$$%
which computes $R(\bar b)$ {using Corollary~\ref{cor:commuting vs anticommuting}} and assigns it to the $(K+1)$-st row. Randomly choose $r_{b}\in \ZZ_{2}$ as the outcome. The new stabilizer is $\tilde R_{K+1} = (R_{K+1},r_{b})$ and the new destabilizer is $\tilde R_{K+{2}} = (R_{K+2},0)$. For $i=2,\cdots,2t$ we do as follows. Introduce a row of scratch space $R({\text{S}})$ initialized at $R({\text{S}}) = R_{K+1}+R_{K+{2}}$ and randomly choose $s_{i}\in \ZZ_{2}$. We make the reassignments:
\begin{eqnarray}
R_{K+2i-1} &\gets & R_{K+2i-1}+R({\text{S}}), \quad \tilde R_{K+2i-1} = (R_{K+2i-1},s_{i}),  \notag\\
R_{K+2i} &\gets & R_{K+2i}+R({\text{S}}), \quad \quad\tilde R_{K+2i} = (R_{K+2i},0),  \notag\\
R({\text{S}}) &\gets & R({\text{S}}) + R_{K+2i-1} + R_{K+2i} \notag
\end{eqnarray}
which are now the $i$th new stabilizer and destabilizer, respectively. Update .%
~For all $i>K+2t$ we update the rows as
\[\tilde R_{i} \gets \tilde R_{i}\ast \tilde R_{K+1}.
\]
We complete the update by making in-place modifications to the tableau, such that for $i=1,\cdots,t$ the rows $\tilde R_{K+2i}$ are in positions $n-m+1$ to $n-m+t$ and rows $\tilde R_{K+2i{-1}}$ are in positions $K+t+1$ to $K+2t$.
}

{%
\item \textbf{Case IV}: Here $b\notin I^{\perp}$ and thus $R(b)$ anti-commutes with one or more stabilizer rows. The update proceeds in a fashion similar to stabilizer theory. We identify a stabilizer row with least index $\lambda$ such that $[R(b),R_{\lambda}]=1$ then for all $i > \lambda$ such that $[R(b),R_{i}]=1$ we make the reassignment%
\[\tilde R_{i} \gets \tilde R_{i}\ast \tilde R_{\lambda}.
\]
{which is similar to Eq.~(\ref{eq:modified basis}) and t}his guarantees that $R(b)$ anti-commutes only with $R_{\lambda}$. Make a random choice $r_{b}\in \ZZ_{2}$. We now make the updates%
\[%
 R_{\lambda-n+m} \gets R_{\lambda} \quad \text{and}\quad  \tilde R_{\lambda }\gets (R(b),r_{b})
\]
to the destabilizer and stabilizer, respectively. Pseudocode can be found in Algorithm~\ref{alg:case-iv}.%
}


\end{itemize}




\begin{thm}\label{thm:cnc-tableau}
The following statements are valid for the CNC tableau:
\begin{enumerate}
\item A CNC tableau can be encoded using $O(n^{2})$ bits of memory.

\item The update of a CNC tableau under:
\begin{itemize}
\item Clifford unitaries can be performed in $O(n)$ time,
\item Pauli measurements can be performed in {$O(n^{2})$ time}.
\end{itemize}
\end{enumerate}

\end{thm}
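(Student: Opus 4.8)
The plan is to assemble the statement from the structural results already established in Sections~\ref{sec:jw-section}--\ref{sec:tableau representation}; no new idea is needed, only a careful cost accounting of the update rules. Part~1 is immediate: by Definition~\ref{def:tableau CNC} a CNC tableau $\tT_\Omega^\gamma$ is a matrix with $O(n)$ rows, each a vector of length $O(n)$ over $\ZZ_2$ (the first $2n$ coordinates encoding a Pauli label via $R(\cdot)$, the last one a phase bit), so it occupies $O(n^2)$ bits. For the Clifford part of Part~2, I would recall that the rows are updated exactly as in \cite{aaronson2004improved}: each generator $\varphi(H_i)$, $\varphi(S_j)$, $\varphi(CX_{ct})$ acts on a single row by a constant number of bit operations on the coordinates indexed by $i$ (respectively $c,t$) together with the $O(1)$ phase-bit bookkeeping of \cite{aaronson2004improved}; applied to all rows this is $O(n)$ per gate, hence $O(nL)$ for a length-$L$ Clifford circuit.

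For the measurement of a Pauli operator $T_b$ I would first invoke Lemma~\ref{lem:case-check-complexity}: forming $\tT_\Omega\,\omega\,R(b)^{T}$ and reading off the commutation pattern identifies in $O(n^2)$ time which of Case~I--IV applies, and in Cases~II--IV also produces $\aA_b$ and the integer $t = N_b$. It then remains to bound each update. In \textbf{Cases~I and~II} the outcome is the deterministic value $\gamma(b)$, and the only work is to compute $\tilde R(b)$, which by the remarks following Eq.~(\ref{eq:row product}) is a row product of $O(n)$ of the tableau rows; since each row product costs $O(n)$ (the $\beta$-evaluation being $O(n)$), this is $O(n^2)$. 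In Case~I the tableau is untouched; in Case~II it is either untouched or has only the phase column of its stabilizer and JW rows altered by adding $[b,\cdot]$, which is $O(n)$.

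\textbf{Case~III} is where I would be most careful. Reordering $\tT_\Omega^\gamma|_{JW}$ so that the $2t$ rows anticommuting with $R(b)$ come first moves $O(n)$ rows of length $O(n)$, i.e.\ $O(n^2)$; the assignment computing $R(\bar b) = \sum R'_i$ over the commuting JW rows (valid by Corollary~\ref{cor:commuting vs anticommuting}) is a sum of $O(n)$ rows, again $O(n^2)$. The loop building the new destabilizer and stabilizer rows for $k = 2,\dots,t$ involves the partial sums $\sum_{j=1}^{2k-2} R'_j$ coming from Lemma~\ref{lem:JW for W}; the point to check is that these are \emph{not} recomputed from scratch at each step but accumulated in a running scratch row $R(\mathrm{S})$ as in Algorithm~\ref{alg:case-iii}, so each $k$ costs $O(n)$ and the loop $O(n^2)$. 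Finally the $O(n)$ row products $\tilde R_i \gets \tilde R_i \ast \tilde R(\bar b)$ on the remaining JW rows cost $O(n^2)$, and summing the contributions gives $O(n^2)$.

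\textbf{Case~IV} mirrors the standard stabilizer measurement: locate the least stabilizer row index $\lambda$ with $[R(b),R_\lambda]=1$, eliminate via $\tilde R_i \gets \tilde R_i \ast \tilde R_\lambda$ over the $O(n)$ rows anticommuting with $b$ ($O(n^2)$), then overwrite the $\lambda$-th destabilizer/stabilizer pair ($O(n)$); total $O(n^2)$. Taking the maximum over the four cases, a Pauli measurement costs $O(n^2)$, which completes Part~2. The only genuinely delicate point is the Case~III accounting just flagged: without the running-sum trick the nested partial sums entering the new $\hat e_i,\hat f_i$ would naively give $O(n^3)$, and verifying that Algorithm~\ref{alg:case-iii} avoids this is the heart of the argument.
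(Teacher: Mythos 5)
Your proposal is correct and follows essentially the same route as the paper's proof: memory count from the $(2n+1)\times(2n+1)$ binary matrix, Clifford updates as in \cite{aaronson2004improved}, Lemma~\ref{lem:case-check-complexity} for the $O(n^2)$ case check, and a per-case cost bound dominated by $O(n)$ row products of cost $O(n)$ each. Your flagged point about accumulating the partial sums in Case~III via a scratch row is exactly what Algorithm~\ref{alg:case-iii} does and is implicit in the paper's ``summing $O(t)$ rows, thus $O(nt)$ basic operations'' accounting, so no gap remains.
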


\begin{proof}%
{We adopt a notion of complexity in which a basic computational step consists of bitwise multiplication or addition.} The proof of the first statement follows from the fact that our CNC tableau is a ${(}2n+1{)}\times {(}2n+1{)}$ matrix over $\ZZ_{2}$ requiring $4n^{2}+4n+1$ bits, which has a linear memory overhead increase over the stabilizer tableau. The update under Clifford unitaries follows in analogy to the stabilizer case, with an $O(1)$ increase due to the additional row that we keep in memory.%
{%
~Let us turn to the analysis of the measurements. First, by Lemma~\ref{lem:case-check-complexity}, it takes $O(n^{2})$ rather than $O(n)$ time to check which of the measurement updates to perform.%
~For Cases I and II we simply require the procedure for outputting the deterministic outcome, which requires $O(n)$ row product operations and thus $O(n^{2})$ operations in total.%
~{Generating the new stabilizer and destabilizer in {Case III}} rows requires summing $O(t)$ rows, thus $O(nt)$ basic operations and $O(n^{2})$ operations in the worst case. {Updating the commuting JW rows in Case III} requires performing $O(m-t)$ row product operations, which has $O(n^{2})$ complexity in the worst case.%
~Finally, {Case IV} requires us to, in the worst case, update $n+m$ rows that anti-commute with $R(b)$, which requires $O(n(n+m))$ operations. {In all cases, scratch space used is $O(n)$. Hence the overall memory complexity is $O(n^{2})$.}
}


\end{proof}




\subsection{Analysis of phase space update rules}

\noindent We now perform a more in-depth analysis of the performance of the phase space tableau, drawing comparisons to the CHP simulator {\cite{aaronson2004improved}} where relevant.%
~A distinguishing feature of our tableau method is that we allow for measurement of general multi-qubit Pauli observables. 
~Thus the CNC approach has an intrinsic quadratic overhead required to \textit{check} which update rule to implement (Lemma~\ref{lem:case-check-complexity}). CHP requires time only linear in $n$ to check if a computational basis measurement will produce a deterministic or random outcome.


\begin{figure}
    \centering
    \begin{subfigure}{0.75\textwidth}
        \centering
        \includegraphics[width=\textwidth]{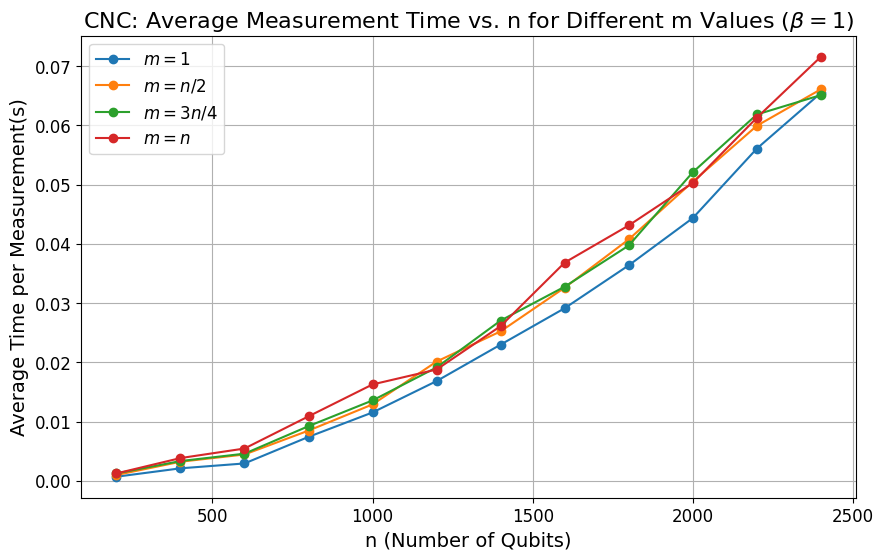}
        \caption{}
        \label{fig:vary-m}
    \end{subfigure}
    
    \vspace{0.5cm} 

    \begin{subfigure}{0.75\textwidth}
        \centering
        \includegraphics[width=\textwidth]{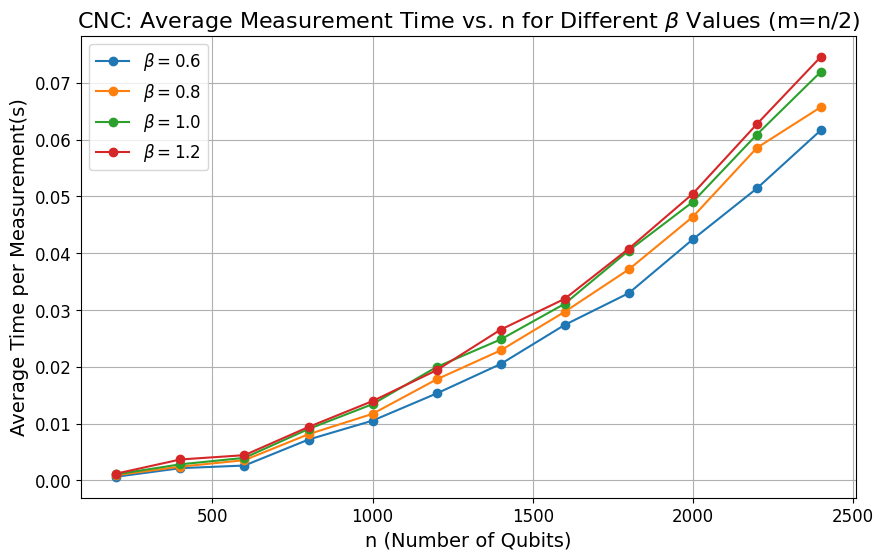}
        \caption{}
        \label{fig:cnc-vary-beta}
    \end{subfigure}
    
    \caption{Figures show the tested measurement time as a function of the qubit count for random Clifford circuits, consisting of $\lfloor\beta n\log_{2} n\rfloor$ gates on $n$ qubits, simulated using our model \cite{CNCSim}. (a) Fixed $\beta = 1$, varying $m$. (b) Varying $\beta$, fixed $m = n/2$. Simulations were run on an Apple M2 MacBook Air with $16$ GB RAM.}
    \label{fig:fig}
\end{figure}

Let us now turn to the update rules.
A novel ingredient in our simulation is the introduction of an additional parameter $m$. There are two extreme cases to consider. When $m=0$ we essentially have the stabilizer tableau and only Case I and Case IV are performed, corresponding to the deterministic and random cases.%
~On the other end of the spectrum, we have $m=n$, in which case only Case II and Case III are performed.%
~{To evaluate how the performance of our simulator scales with respect to $m$, we conduct the following computational experiment, following the setup in~\cite{aaronson2004improved}:}
\begin{itemize}
\item Fix $n, m$ and $\beta > 0$.
\item Choose $\lfloor\beta n\log_{2} n\rfloor$ Clifford gates $H_{i}$, $S_{j}$, $CX_{ij}$, each with probability $1/3$.
\item Computational basis measurement of each qubit in sequential order.
\end{itemize}
We ran this experiment for a fixed value of $\beta =1$ and with $n$ ranging from $200$ to $2,400$ in increments of $200$. For each $n$ we chose for $m$ the values $0$, $1$, $n/4$, $n/2$, $3n/4$, and $n$. We find that increased $m$ results in increased runtime, likely because with increased $m$ there are more applications of the Case II and Case III update rules, which are more complex as compared to Case I and Case IV, respectively; see {Figure~\ref{fig:vary-m}}.


In \cite{aaronson2004improved} they vary the parameter $\beta$ and observe a ``phase transition" in which the simulation shifts from a roughly linear to quadratic runtime. This is due to the increased number of calls to the row product operation as $\beta$, i.e. the number of Clifford {unitaries}, increases.\footnote{Our row product operation corresponds to the \textbf{rowsum} operation of AG \cite{aaronson2004improved}.} The CNC tableau simulation, on the other hand, is less sensitive to changes in $\beta$; see Figure~\ref{fig:cnc-vary-beta}. This is largely due to the fact that we have an intrinsic quadratic overhead just to check which case to apply. However, the CNC simulation is not entirely insensitive to $\beta$ since in real case scenarios Case I and Case II updates are more efficient than Case III and Case IV updates, and the former occur less frequently as $\beta$ increases.%



When we fix $m=0$ the CNC tableau has the structure of the AG tableau, apart from an additional row that has all zero entries in this case. We compared the performance of our CNC simulator with $m=0$ to a Python implementation of the AG tableau due to Craig Gidney \cite{STABSim}. 
Our results can be found in Figure \ref{fig:chp-comparison}. Numerically we see that our implementation is considerably more performant (note the vertical axis scaling), although the reason for this is somewhat unclear since our Case I and Case IV should parallel the determined and random update rules of stabilizer simulation.

\section{Phase space tableau simulation}
\label{sec:phase space tableau simulation}

For classical simulation of QCM we consider an initial $n$-qubit quantum state expanded using CNC operators as a generally quasi-probabilistic mixture
$$\rho = \sum_{\alpha}W_{\rho}(\alpha)A_{\alpha},$$
where {$\alpha$ runs over}
our phase space {points} which consists of all maximal $(n,m)$ CNC pairs. {This initial state may be completely general}; for example, it could be a pure magic state or its approximation if one is willing to tolerate some error in the computation.

Adaptive stabilizer operations necessary for universal quantum computation can be described using \textit{instruments}; see e.g., \cite{watrous2018theory}. For Hilbert spaces $\hH$ and $\kK$, an instrument $\Phi$ with outcome set $\Sigma$ consists of completely positive linear operators $\Phi^{s}:L(\hH)\to L(\kK)$ for all $s\in \Sigma$ such that $\Phi = \sum_{s\in\Sigma}\Phi^{s}$ is a quantum channel. Sequences of instruments can be composed and satisfy the channel condition
\[\Psi = \sum_{{s} \in \Sigma_{1}\times \cdots {\times} \Sigma_{T}} \left (\Phi_{T}\circ \cdots \circ \Phi_{1}\right )^{{s}} = \Psi_{N}\circ \cdots \circ \Psi_{1}.\]
We consider instruments consisting of Pauli measurements and Clifford gates, optionally with classical feedforward and adaptivity, that can be given a Kraus decomposition
\[
\Phi^{s_{i}}({A}) = V_{s_{i}} \Pi_{b}^{s_{i}} \,{A}\,  \Pi_{b}^{s_{i}} V_{s_{i}}^{\dagger}
\]
where ${A}\in L(\hH)$ is an arbitrary linear operator and $V_{{s_{i}}}$ is a Clifford unitary {for $s_{i}\in \Sigma_{i}$}. Observe that non-adaptive Clifford unitaries can always be absorbed into the definition of $V_{{s_{i}}}$. These instruments are referred to as \textit{completely stabilizer preserving instruments} and the corresponding channels are a subset of simulable \textit{completely stabilizer-preserving channels} \cite{seddon2019quantifying}. An important class of examples includes gadgets that deterministically implement non-Clifford gates by performing adaptive stabilizer operations acting on a magic state input.\footnote{{W}e do not consider the most general notion of a completely stabilizer-preserving channel. An extensive study of completely stabilizer preserving operations and their related monotones was performed in \cite{seddon2019quantifying}.}

\begin{figure}
    \centering
    \includegraphics[width=0.75\textwidth]{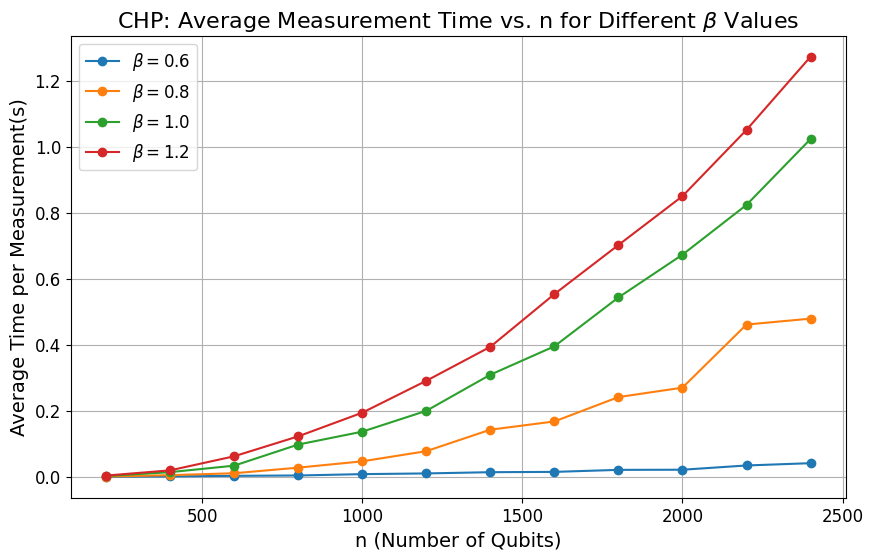}

    \vspace{0.5cm} 

    \includegraphics[width=0.75\textwidth]{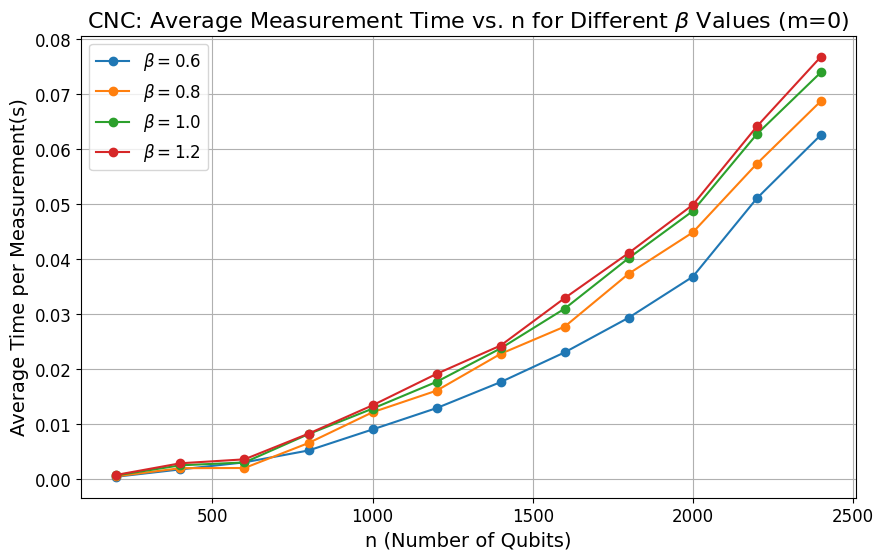} 
    
    \caption{Figures show the tested measurement time as a function of the qubit count for random Clifford circuits, consisting of $\lfloor\beta n\log_{2} n\rfloor$ gates on $n$ qubits, simulated using our model \cite{CNCSim} and CHP implementation of Craig Gidney \cite{STABSim}. Both simulations were run on an Apple M1 Pro MacBook Pro with $16$ GB RAM.}
    \label{fig:chp-comparison}
\end{figure}

\subsection{Classical simulation of QCM}

There are two distinct sampling based classical simulation tasks one can perform depending on whether $W_{\rho}$ represents a proper probability distribution or not. 
~In the former case we perform a \textit{weak simulation} \cite{jozsa2013classical,pashayan2020estimation}, in which the output of the simulation are samples from the true quantum mechanical probability distribution obtained via the Born rule. When negativity is present, however, one cannot directly sample from $W_{\rho}$ and weak simulation is not possible. One instead aims in this case to \textit{estimate} Born rule probabilities using the methods of \cite{pashayan2015estimating}; see also \cite{howard2017application}.

\subsubsection{Weak simulation}

Let us consider a sequence $\Phi_{1},\cdots,\Phi_{T}$ of $n$-qubit stabilizer instruments.%
~Quantum states $\rho$ for which $W_{\rho}$ is a probability distribution lie in $\text{CP}_{n}$, the CNC polytope given by the convex hull of maximal $n$-qubit CNC operators. Weak simulation proceeds by first sampling an initial phase space element $\alpha_{0}$ from the probability distribution $W_{\rho}$. Each stabilizer instrument $\Phi_i$ induces a stochastic map $\alpha_{i-1} \mapsto (s_i, \alpha_i)$, where the pair $(s_i, \alpha_i)$ is sampled from a conditional probability distribution $Q_{\alpha_{i-1}, i}$. {This distribution is determined by
\[
\Phi_{i}(A_{\alpha_{i-1}}) = V_{s_{i}}\Pi_{a_{i}}A_{\alpha_{i-1}}\Pi_{a_{i}} V_{s_{i}}
\]
and Eqs.~(\ref{eq:updates}) and~(\ref{eq:Case III maximal version}), up to a known Clifford unitary. Here $\alpha_{i-1} = (\Omega_{i-1},\gamma_{i-1})$ denotes the phase space element sampled after the application of $\Phi_{i-1}$.}

In practice these stochastic maps are implemented using the phase space tableau so that a stabilizer instrument $\Phi_{i}$ induces an efficient tableau update $\tilde \tT_{\alpha_{i-1}}\mapsto \tilde \tT_{\alpha_{i}}$. Note that the outcome can be retrieved from the tableau. Our procedure for weak simulation using the phase space tableau is given in Algorithm~\ref{alg:weak-sim}.%
~Correctness of our algorithm follows from the correctness of our update rules and Theorem~$3$ of \cite{raussendorf2020phase}. If $W_{\rho}$ can be efficiently sampled from, then efficiency of our tableau update rules (Theorem~\ref{thm:cnc-tableau}) make explicit the efficiency of Algorithm~\ref{alg:weak-sim}.

\subsubsection{Born rule estimation}

Suppose that $\rho$ is an $n$-qubit quantum state that does not lie in $\text{CP}_{n}$ and that $\Psi$ is a stabilizer channel with decomposition into stabilizer instruments $\Phi_{1},\cdots,\Phi_{T}$ so that $\Psi = \sum_{ s} \left (\Phi_{T}\circ \cdots \circ \Phi_{1}\right )^{ s}$, as above. In many cases of interest the initial state is a magic state and $\Psi$ implements a unitary coming from a universal gate set on a subset of the qubits. We wish to estimate the Born rule probability
\begin{eqnarray}
p(x) := \Tr\left (\Pi^{{x}}\Psi(\rho)\right ) = \sum_{\alpha\in V}W_{\rho}(\alpha)\Tr\left (\Pi^{ x}\Psi(A_{\alpha})\right ),%
\label{eq:born-rule}
\end{eqnarray}
where $\Pi^{ x} = \ket{ x}\bra{ x}\otimes I_{n-k}$ is a computational basis measurement (stabilizer projector) on the first $k\leq n$ qubits for a fixed outcome $ x \in \ZZ_{2}^{k}$. This can be rewritten as an expectation value
$$p( x) = \sum_{\alpha\in V}\tilde W_{\rho}(\alpha)\cdot E( x|\alpha)\quad \text{with}\quad E( x|\alpha) =  \text{sign}\left (\alpha\right )\cdot \lVert W_{\rho} \rVert_{1} \cdot   p( x|\alpha),$$%
and where $\sigma(\alpha)$ is the sign of the weight $W_{\rho}(\alpha)$, $\lVert W_{\rho} \rVert_{1}$ is the \textit{negativity} of the quasi-mixture, and $p( x|\alpha) = \Tr\left (\Pi^{ x}\Psi(A_{\alpha})\right )$ is the probability of obtaining $ x$ given an initial phase space element $\alpha$. The average is taken with respect to the normalized probability distribution $\tilde W_{\rho}(\alpha) = |W_{\rho}(\alpha)|/\lVert W_{\rho}\rVert_{1}$.

We pursue a sampling based approach to estimating $p( x)$ using the methods of \cite{pashayan2015estimating,howard2017application}. Fix a number $M$ of samples and consider the estimator
$$\hat p( x) = \frac{1}{M}\sum_{i} \hat E_{i},$$
where $\hat E_{i} = E( x|\alpha_{i})$ and $\alpha_{i}$ is the $i$th initial sample. In many practical cases it is efficient to \textit{estimate} $\hat E_{i}$ since we can efficiently sample from the probability $p( x|\alpha_{i})$ using the phase space tableau, as in weak simulation, and then use a $\text{poly}(n)$ subroutine to produce an estimate up to the desired precision.\footnote{It was shown by Jozsa and Van den Nest \cite{jozsa2013classical} that computing exact Born rule probabilities (i.e., strong simulation) of \textit{adaptive} stabilizer circuits is in the complexity class $\#P$ and thus unlikely to be efficient in general.}

Central to the estimation procedure (Algorithm~\ref{alg:born-rule-estimation}) is that $\hat p( x)$ is an unbiased estimator of the Born rule probability $p( x)$ whose value is bounded by $\pm \lVert W_{\rho} \rVert_{1}$. Hoeffding's inequality \cite{hoeffding1994probability} then implies that to estimate $p( x)$ within an additive error $\epsilon$ with probability $1-\delta$ requires a number of samples given by
$$M \geq  \frac{2}{\epsilon^{2}}\lVert W_{\rho} \rVert_{1}^{2}\log(2/\delta).$$
Each run of the circuit can be performed in $\text{poly}(n)$ time, so the dominant contribution to the complexity of estimation is the number of samples $M$, which scales with $\lVert W_{\rho} \rVert_{1}^{2}$. When the decomposition is optimal the sample complexity scales with $\mathfrak{R}(\rho)^{2}$.



\subsection{Phase space simulation of quantum algorithms}

As a proof of concept we tested our phase space tableau approach \cite{CNCSim}  on two families of deterministic quantum algorithms---one solving the hidden shift problem \cite{rotteler2010quantum}, while the other solves the well-known Deutsch-Jozsa problem \cite{deutsch1992rapid}. In particular, we used quasi-probabilistic methods to estimate the Born rule probability of a fixed deterministic outcome, as described above. 


\subsubsection{Hidden shift algorithm}

In recent years the hidden shift quantum algorithm \cite{rotteler2010quantum} has become widely used to benchmark methods for classically simulating Clifford+$T$ circuits \cite{bravyi2016trading,bravyi2016improved,bravyi2019simulation,pashayan2021fast,kissinger2022simulating}.
{This algorithm was originally introduced by Roetteler in~\cite{rotteler2010quantum} to solve the hidden shift problem and to demonstrate an oracle separation between the complexity classes \textsf{BQP} and \textsf{P}.} A circuit implementation of the hidden shift oracle was given by Bravyi and Gosset in \cite{bravyi2016improved} using the $\{Z,CZ,CCZ\}$ diagonal gate set. An appealing feature of the hidden shift algorithm is that one can control both the number of qubits $n$ and the number of non-Clifford gates $m$ so that there are instances of the problem that are solved by Clifford dominated circuits with large $n$ and relatively small $m$, a task which is impractical using state vector simulation.



For our family of oracles we first fix positive integers $\kappa$ and $\nu\geq 3\kappa$, with $n=2\nu$ and $c=2\kappa$. The parameter $\kappa$ controls the total number of $CCZ$ gates. Consider also a hidden shift string {$ x = (x_{1},\cdots,x_{n})\in \ZZ_{2}^{n}$} as well as a pair of Maiorana-McFarland bent functions $f,f':\ZZ_{2}^{\nu}\to \{\pm 1\}$ such that $f'$ is the Walsh-Hadamard transformation of the shifted version of $f$. We then have that
\[
\ket{x} = U\ket{0},\quad \text{where}\quad U = H^{\otimes n} O_{f'} H^{\otimes n} O_{f} H^{\otimes n}.
\]
The oracles are $n$-qubit diagonal unitary operators that can be constructed explicitly according to
\[
O_{f} = \left (\prod_{i=1}^{\nu}CZ_{i,i+\nu}\right ) \left (O_{g}\otimes I_{\nu}\right )\quad \text{and}\quad O_{f'} = \left (\prod_{i=1}^{\nu}CZ_{i,i+\nu}\right ) \left (I_{\nu} \otimes O_{g}\right ) Z(x).
\]
Here $Z(x) = Z^{x_{1}}\otimes \cdots \otimes Z^{x_{n}}$ is an $n$-qubit unitary and $O_{g}$ is a $\nu$-qubit diagonal unitary corresponding to an arbitrary Boolean function $g:\ZZ_{2}^{\nu}\to \ZZ_{2}$. For our purposes we choose $O_{g}$ to be of the form
\[
O_{g} = \prod_{j=0}^{c-1}CCZ_{j_{1},j_{2},j_{3}}\quad \text{where}\quad j_{k} = k+jc\quad (k=1,2,3).
\]
If $n> 3\nu$ then one can perform an arbitrary unitary on the remaining qubits using diagonal Clifford gates drawn from $\{Z,CZ\}$.

\subsubsection{Deutsch-Jozsa algorithm}
The Deutsch-Jozsa problem \cite{deutsch1992rapid} is among the first examples of a problem whose quantum solution shows an exponential separation with the corresponding classical solution. Here we give a brief overview of the quantum algorithm; for additional details, see \cite{nielsen2010quantum}. We are promised that a Boolean function $f:\ZZ_{2}^{n}\to \ZZ_{2}$ is either constant or balanced and we are given access to an oracle $O_{f}$ which is an $(n+1)$-qubit unitary operator whose action on the computational basis is given by $O_{f}\ket{y,z} = \ket{y, z + f(y)}$ with $y\in\ZZ_{2}^{n}$ and where the sum is taken modulo $2$. Implementing the unitary $U = H^{\otimes n}O_{f}H^{\otimes (n+1)}$ we have a final quantum state given by $\ket{\psi} = U\ket{0, 1}$ wherein $\bra{\psi}(\Pi^{0}\otimes I)\ket{\psi}=0,1$ if $f$ is balanced, constant, respectively.

Our explicit construction of the oracle $O_{f}$ is as follows. Fix some positive integer $c$ and let $n\geq 3c$. We consider Boolean functions $f:\ZZ_{2}^{n}\to \ZZ_{2}$ whose polynomial degree is at most quadratic in the inputs. When $f$ is constant we simply take $f(y)=0$ for all $y\in \ZZ_{2}^{n}$ so that $O_{f}$ is just the identity. For the balanced case we take
\[
f(y) = \sum_{j=0}^{c-1}y_{j_{1}}y_{j_{2}}+y_{j_{3}}+\sum_{k=3c+1}^{n}y_{k}.
\]
The function $f$ is readily seen to be a balanced function {since it is a sum of balanced functions with disjoint domains in $\ZZ_{2}^{n}$}. To implement the quantum oracle recall first that the action of $CX$ and the Toffoli gate $CCX$ on the computational basis is $CX_{c,t}\ket{y_{c},y_{t}}=\ket{y_{c},y_{t}+y_{c}}$ and $CCX_{c_{1},c_{2},t}\ket{y_{c_{1}},y_{c_{2}},y_{t}}=\ket{y_{c_{1}},y_{c_{2}},y_{t}+y_{c_{1}}y_{c_{2}}}$, respectively. We then have that
 the unitary implementing the oracle $O_{f}$ is given by
\[
O_{f}= \prod_{j=0}^{c-1} CCX_{j_{1},j_{2},j_{3}}\,CX_{j_{3},t} ~ \prod_{k=3c+1}^{n} CX_{k,t}.
\]
To our knowledge the Deutsch-Jozsa algorithm has not previously been used to benchmark Clifford+$T$ circuits. One advantage of our Deutsch-Jozsa oracle over that used in the hidden shift algorithm is that ours is more flexible in its use of non-Clifford gates since it can allow for any number of $CCX$ gates while the hidden shift oracle always requires an even number of $CCZ = (I\otimes I\otimes H)\, CCX \, (I\otimes I\otimes H)$ gates.



\subsubsection{Numerical implementation and results}


{We used our publicly available code in \cite{CNCSim} to obtain Born rule estimates for the hidden shift and Deutsch-Jozsa circuits.} For a quantum circuit with $n$ qubits and $m$ $T$ gates our initial state in QCM will be given by
\begin{eqnarray}
\rho = \ket{ 0}\bra{ 0}\otimes \left (\ket{T}\bra{T}\right )^{\otimes m}\quad \text{where}\quad \ket{T}=T\ket{+} = \frac{1}{\sqrt{2}}\left (\ket{0}+e^{i\pi/4}\ket{1}\right )\notag
\end{eqnarray}
and $\ket{0}$ is $n$-qubit all zero state. It suffices, in fact, to decompose the $m$-fold tensor product of $\ket{T}\bra{T}$ as a quasi-mixture of CNC operators since 
the tensor product of a CNC operator with a stabilizer state {(Example \ref{ex:composition})} is once again a CNC operator.

Suppose that a CNC decomposition $W^{(k)}_{C}$ is found for $k$-fold tensor product of $\ket{T}$ 
 and that stabilizer decompositions $W_{S}^{(\ell)}$ and $W_{S}^{(z)}$ are found for 
 $\ell$-fold and $z$-fold tensor products of the $\ket{T}$ state, respectively, such that $m = k+ r\ell + z$, for some non-negative integers $r,z$. We then have that the sample complexity scales as
\[
M\sim\frac{1}{\epsilon^{2}} \lVert W_{C}^{(k)} \rVert_{1}^{2}\cdot \lVert W_{S}^{(\ell)} \rVert_{1}^{2r}\cdot \lVert W_{S}^{(z)} \rVert_{1}^{2}.
\]
For fixed $k$ the asymptotic scaling follows the stabilizer part of the decomposition with the CNC part giving a constant improvement over using stabilizer states alone. For practical implementations we were able to obtain a CNC decomposition for $k=4$ and stabilizer decompositions with $\ell =4$ and $z\leq 3$. Precise numerical values of robustness can be found in Table~\ref{tab:comparison robustness}. {We used these decompositions to estimate Born rule probabilities for the hidden shift and Deutsch-Jozsa algorithms; see Table~\ref{tab:sim-results}.}

\begin{table}[h!]
\centering
\renewcommand{\arraystretch}{1.3}
\begin{tabular}{c c c c c}\hline\hline
~~~&%
$\ket{T}$	& $\ket{T}^{\otimes 2}$ 	& $\ket{T}^{\otimes 3}$ 	& $\ket{T}^{\otimes 4}$ \\ \hline
$\mathfrak{R}_{\on{C}}$&%
$1.000$              & $1.000$               & $1.283$               & $\leq 2.172$          \\ \hline
$\mathfrak{R}_{\on{S}}$&%
$1.414$              & $1.748$               & $2.219$               & $2.863$               \\ \hline\hline
\end{tabular}
\renewcommand{\arraystretch}{1}
\caption{Robustness values for the CNC and stabilizer phase spaces with $n\leq 4$ qubits. CNC robustness for $n=4$ is not exact and was obtained using a subset of CNC operators in the optimization problem. For best known upper-bounds on the robustness of magic, see \cite{heinrich2019robustness}.}
\label{tab:comparison robustness}
\end{table}

\begin{table}[h!]
  \centering
  {
  \begin{subtable}[t]{\textwidth}
    \centering
    \renewcommand{\arraystretch}{1.3}
    \begin{tabular}{c c c c c c c c}
      \toprule\toprule
      $n$ & $m$ & $\epsilon$ & $\delta$ & $\lVert W \rVert_{1}$ & $M$ & $\tau$ (s) & $\hat p(x)$ \\
      \midrule
      $6$   & $14$ & $0.1$  & $0.1$ & $31.1$       & $579{,}626$   & $1{,}284$   & $1.010$ \\
      $106$ & $14$ & $0.1$  & $0.1$ & $31.1$       & $579{,}626$   & $19{,}022$  & $1.000$ \\
      $6$   & $14$ & $0.01$ & $0.1$ & $31.1$       & $57{,}962{,}672$ & $127{,}688$ & $0.997$ \\
      \bottomrule\bottomrule
    \end{tabular}
    \renewcommand{\arraystretch}{1}
  \end{subtable}
  
\vspace{2 em}  
  
  \begin{subtable}[t]{\textwidth}
    \centering
    \renewcommand{\arraystretch}{1.3}
    \begin{tabular}{c c c c c c c c}
      \toprule\toprule
      $n$   & $m$   & $\epsilon$ & $\delta$ & $\lVert W \rVert_{1}$ & $M$        & $\tau$ (s) & $\hat p(x)$ \\
      \midrule
      $4$   & $7$   & $0.1$  & $0.1$ & $4.82$      & $13{,}914$    & $13$       & $0.000$ \\
      $104$ & $7$   & $0.1$  & $0.1$ & $4.82$      & $13{,}914$    & $373$      & $0.000$ \\
      $4$   & $7$   & $0.01$ & $0.1$ & $4.82$      & $1{,}391{,}418$ & $1{,}314$   & $0.000$ \\
      $104$ & $7$   & $0.01$ & $0.1$ & $4.82$      & $1{,}391{,}418$ & $36{,}817$  & $0.000$ \\
      $7$   & $14$  & $0.1$  & $0.1$ & $31.1$      & $579{,}626$   & $1{,}162$   & $0.000$ \\
      $107$ & $14$  & $0.1$  & $0.1$ & $31.1$      & $579{,}626$   & $18{,}199$  & $0.000$ \\
      $7$   & $14$  & $0.01$ & $0.1$ & $31.1$      & $57{,}962{,}672$ & $36{,}816$  & $0.000$ \\
      \bottomrule\bottomrule
    \end{tabular}
    \renewcommand{\arraystretch}{1}
  \end{subtable}
}
  \caption{%
  {%
    Simulation results for (top) hidden shift algorithm with $x=(1,\cdots,1)\in \ZZ_{2}^{n}$ and (bottom)  Deutsch-Jozsa algorithm with $x=(0,\cdots,0)\in \ZZ_{2}^{n}$. The run time $\tau$ is measured in seconds. See \cite{CNCSim} for implementation details.
    }
  }
  \label{tab:sim-results}
\end{table}





\section{Conclusion}

{In this paper, we introduce a novel tableau representation for the phase space simulation algorithm based on closed non-contextual (CNC) operators. These operators can be viewed as a generalization of stabilizer states. The resulting simulation algorithm efficiently simulates a broader class of quantum circuits beyond stabilizer circuits. The complexity of our simulator scales similarly to the well-known Gottesman–Knill simulation. {We have implemented our simulator in~\cite{CNCSim} and benchmarked it on basic quantum algorithms, including the hidden shift and Deutsch--Jozsa algorithms.}

{The phase space simulator is part of a more general class of simulators based on p}olytope-theoretic methods. {These methods} can be utilized to provide a systematic approach to the problem of identifying quantum resources responsible for quantum advantage. The $\Lambda$ polytope and the corresponding simulator introduced in \cite{zurel2020hidden} is the first examples of a probabilistic representation of quantum theory on a finite state space. Our phase space points are also vertices of the $\Lambda$ polytope, specifying a region of efficiently simulatable quantum states. However, discovering more vertices of $\Lambda$ can extend this region. Therefore, the systematic and refined analysis of the symplectic structure of the qubit phase space presented in our work serves as an important step in identifying additional combinatorial structures that can be used for efficiently describing and updating more $\Lambda$ vertices. On the other hand, exploring computational models beyond QCM, such as measurement-based quantum computation \cite{raussendorf2001one}, introduces a new family of polytopes that provide a probabilistic representation of quantum theory; see \cite{okay2024classical} for the local Pauli simulator. The tableau method introduced in this work may be extended to incorporate newly discovered types of vertices, ultimately expanding the region of efficiently simulatable quantum states and encompassing a broader class of quantum circuits. {As more vertices are discovered, the simulation polytope expands, leading to a decrease in the corresponding robustness measure. This, in turn, improves efficiency when the algorithm is initialized with a quasi-probabilistic representation of the resource state.}

\bibliography{bib}

\begin{thebibliography}{10}

\bibitem{gottesman1999group22}
D.~Gottesman, ``Group22: Proceedings of the xxii international colloquium on
  group theoretical methods in physics,'' 1999.

\bibitem{magic}
S.~Bravyi and A.~Kitaev, ``Universal quantum computation with ideal clifford
  gates and noisy ancillas,'' {\em Physical Review A}, vol.~71, no.~2,
  p.~022316, 2005.

\bibitem{bravyi2016trading}
S.~Bravyi, G.~Smith, and J.~A. Smolin, ``Trading classical and quantum
  computational resources,'' {\em Physical Review X}, vol.~6, no.~2, p.~021043,
  2016.

\bibitem{bravyi2016improved}
S.~Bravyi and D.~Gosset, ``Improved classical simulation of quantum circuits
  dominated by clifford gates,'' {\em Physical review letters}, vol.~116,
  no.~25, p.~250501, 2016.

\bibitem{bravyi2019simulation}
S.~Bravyi, D.~Browne, P.~Calpin, E.~Campbell, D.~Gosset, and M.~Howard,
  ``Simulation of quantum circuits by low-rank stabilizer decompositions,''
  {\em Quantum}, vol.~3, p.~181, 2019.

\bibitem{howard2017application}
M.~Howard and E.~Campbell, ``Application of a resource theory for magic states
  to fault-tolerant quantum computing,'' {\em Physical review letters},
  vol.~118, no.~9, p.~090501, 2017.

\bibitem{nielsen2010quantum}
M.~A. Nielsen and I.~L. Chuang, {\em Quantum computation and quantum
  information}.
\newblock Cambridge university press, 2010.

\bibitem{aaronson2004improved}
S.~Aaronson and D.~Gottesman, ``Improved simulation of stabilizer circuits,''
  {\em Physical Review A—Atomic, Molecular, and Optical Physics}, vol.~70,
  no.~5, p.~052328, 2004.

\bibitem{seddon2021quantifying}
J.~R. Seddon, B.~Regula, H.~Pashayan, Y.~Ouyang, and E.~T. Campbell,
  ``Quantifying quantum speedups: Improved classical simulation from tighter
  magic monotones,'' {\em PRX Quantum}, vol.~2, no.~1, p.~010345, 2021.

\bibitem{raussendorf2020phase}
R.~Raussendorf, J.~Bermejo-Vega, E.~Tyhurst, C.~Okay, and M.~Zurel,
  ``Phase-space-simulation method for quantum computation with magic states on
  qubits,'' {\em Physical Review A}, vol.~101, no.~1, p.~012350, 2020.

\bibitem{veitch2012negative}
V.~Veitch, C.~Ferrie, D.~Gross, and J.~Emerson, ``Negative quasi-probability as
  a resource for quantum computation,'' {\em New Journal of Physics}, vol.~14,
  no.~11, p.~113011, 2012.

\bibitem{raussendorf2017contextuality}
R.~Raussendorf, D.~E. Browne, N.~Delfosse, C.~Okay, and J.~Bermejo-Vega,
  ``Contextuality and wigner-function negativity in qubit quantum
  computation,'' {\em Physical Review A}, vol.~95, no.~5, p.~052334, 2017.

\bibitem{mermin1993hidden}
N.~D. Mermin, ``Hidden variables and the two theorems of john bell,'' {\em
  Reviews of Modern Physics}, vol.~65, no.~3, p.~803, 1993.

\bibitem{CNCSim}
BilQCT, ``{CNCS}im.'' \url{https://github.com/BilQCT/CNCSim}.
\newblock Accessed: 2025-02-28.

\bibitem{okay2017topological}
C.~Okay, S.~Roberts, S.~D. Bartlett, and R.~Raussendorf, ``Topological proofs
  of contextuality in qunatum mechanics,'' {\em Quantum Information \&
  Computation}, vol.~17, no.~13-14, pp.~1135--1166, 2017.

\bibitem{kirby2019contextuality}
W.~M. Kirby and P.~J. Love, ``Contextuality test of the nonclassicality of
  variational quantum eigensolvers,'' {\em Physical review letters}, vol.~123,
  no.~20, p.~200501, 2019.

\bibitem{zurel2020hidden}
M.~Zurel, C.~Okay, and R.~Raussendorf, ``Hidden variable model for universal
  quantum computation with magic states on qubits,'' {\em Physical Review
  Letters}, vol.~125, no.~26, p.~260404, 2020.

\bibitem{pashayan2015estimating}
H.~Pashayan, J.~J. Wallman, and S.~D. Bartlett, ``Estimating outcome
  probabilities of quantum circuits using quasiprobabilities,'' {\em Physical
  review letters}, vol.~115, no.~7, p.~070501, 2015.

\bibitem{sarkar2019sets}
R.~Sarkar and E.~v.~d. Berg, ``On sets of commuting and anticommuting paulis,''
  {\em arXiv preprint arXiv:1909.08123}, 2019.

\bibitem{silva2001lectures}
A.~C.~d. Silva, ``Lectures on symplectic geometry,'' {\em Lecture Notes in
  Mathematics}, vol.~1764, 2001.

\bibitem{koenig2014efficiently}
R.~Koenig and J.~A. Smolin, ``How to efficiently select an arbitrary clifford
  group element,'' {\em Journal of Mathematical Physics}, vol.~55, no.~12,
  2014.

\bibitem{zurel2020hiddenthesis}
M.~Zurel, ``Hidden variable models and classical simulation algorithms for
  quantum computation with magic states on qubits,'' Master's thesis,
  University of British Columbia, 2020.

\bibitem{STABSim}
C.~Gidney, ``{P}ython {CHP} {S}tabilizer {S}imulator.''
  \url{https://github.com/Strilanc/python-chp-stabilizer-simulator}.
\newblock Accessed: 2025-03-21.

\bibitem{watrous2018theory}
J.~Watrous, {\em The theory of quantum information}.
\newblock Cambridge university press, 2018.

\bibitem{seddon2019quantifying}
J.~R. Seddon and E.~T. Campbell, ``Quantifying magic for multi-qubit
  operations,'' {\em Proceedings of the Royal Society A}, vol.~475, no.~2227,
  p.~20190251, 2019.

\bibitem{jozsa2013classical}
R.~Jozsa and M.~V.~d. Nest, ``Classical simulation complexity of extended
  clifford circuits,'' {\em arXiv preprint arXiv:1305.6190}, 2013.

\bibitem{pashayan2020estimation}
H.~Pashayan, S.~D. Bartlett, and D.~Gross, ``From estimation of quantum
  probabilities to simulation of quantum circuits,'' {\em Quantum}, vol.~4,
  p.~223, 2020.

\bibitem{hoeffding1994probability}
W.~Hoeffding, ``Probability inequalities for sums of bounded random
  variables,'' {\em The collected works of Wassily Hoeffding}, pp.~409--426,
  1994.

\bibitem{rotteler2010quantum}
M.~R{\"o}tteler, ``Quantum algorithms for highly non-linear boolean
  functions,'' in {\em Proceedings of the twenty-first annual ACM-SIAM
  symposium on Discrete algorithms}, pp.~448--457, SIAM, 2010.

\bibitem{deutsch1992rapid}
D.~Deutsch and R.~Jozsa, ``Rapid solution of problems by quantum computation,''
  {\em Proceedings of the Royal Society of London. Series A: Mathematical and
  Physical Sciences}, vol.~439, no.~1907, pp.~553--558, 1992.

\bibitem{pashayan2021fast}
H.~Pashayan, O.~Reardon-Smith, K.~Korzekwa, and S.~D. Bartlett, ``Fast
  estimation of outcome probabilities for quantum circuits,'' {\em arXiv
  preprint arXiv:2101.12223}, 2021.

\bibitem{kissinger2022simulating}
A.~Kissinger and J.~van~de Wetering, ``Simulating quantum circuits with
  zx-calculus reduced stabiliser decompositions,'' {\em Quantum Science and
  Technology}, vol.~7, no.~4, p.~044001, 2022.

\bibitem{heinrich2019robustness}
M.~Heinrich and D.~Gross, ``Robustness of magic and symmetries of the
  stabiliser polytope,'' {\em Quantum}, vol.~3, p.~132, 2019.

\bibitem{raussendorf2001one}
R.~Raussendorf and H.~J. Briegel, ``A one-way quantum computer,'' {\em Phys.
  Rev. Lett.}, vol.~86, pp.~5188--5191, May 2001.

\bibitem{okay2024classical}
C.~Okay, A.~T. Yucel, and S.~Ipek, ``Classical simulation of universal
  measurement-based quantum computation using multipartite bell scenarios,''
  {\em arXiv preprint arXiv:2410.23734}, 2024.

\end{thebibliography}
\bibliographystyle{ieeetr}

\appendix
{

\section{Proof of CNC classification}
\label{sec:proof of cnc classification}

  \Lem{ 
    Let $p_1, p_2, p_3, p_4 \in E_n$ satisfying
    \begin{equation}\label{eq:mermin_relations}
      \begin{aligned}
            &[p_1, p_2] = [p_1, p_3] = [p_2, p_4] = [p_3, p_4] = 0\\
            &[p_1, p_4] = [p_2, p_3] = 1
      \end{aligned}
    \end{equation}
    Then there is no CNC set that contains $\set{p_1, p_2, p_3, p_4}$.
  }

  \Proof{
    Let $\Omega$ be a CNC set that contains $\set{p_1, p_2, p_3, p_4}$ and $\gamma: \Omega \to \mathbb{Z}_2$ be a value assignment on $\Omega$.
    By closedness, we know that $p_1 + p_2, p_1 + p_3, p_2 + p_4, p_3 + p_4 \in \Omega$. Let $\gamma(p_i) = \alpha_i$ for all $i \in \set{1,2,3,4}$ and $\alpha = \sum_{i=1}^4 \alpha_i$.
    Then, 
    \begin{align*}
      (-1)^\alpha T_{p_1}T_{p_2}T_{p_3}T_{p_4} &= (-1)^{\alpha}((-1)^{\beta(p_1, p_2)}T_{p_1 + p_2})((-1)^{\beta(p_3, p_4)}T_{p_3 + p_4})\\
      &= ((-1)^{\alpha_1 + \alpha_2 + \beta(p_1 + p_2)}T_{p_1 + p_2})((-1)^{\alpha_3 + \alpha_4 + \beta(p_3 + p_4)}T_{p_3 + p_4})\\
      &= (-1)^{\gamma(p_1 + p_2) + \gamma(p_3 + p_4)}T_{p_1 + p_2}T_{p_3 + p_4}\\
      &= (-1)^{\gamma(p_1 + p_2) + \gamma(p_3 + p_4) + \beta(p_1 + p_2, p_3 + p_4)}T_{p_1 + p_2 + p_3 + p_4}\\
      &= (-1)^{\gamma(p_1 + p_2 + p_3 + p_4)}T_{p_1 + p_2 + p_3 + p_4}
    \end{align*}
    By similar calculations we also find that 
    \begin{align*}
      (-1)^\alpha T_{p_1}T_{p_3}T_{p_2}T_{p_4} = (-1)^{\gamma(p_1 + p_2 + p_3 + p_4)}T_{p_1 + p_2 + p_3 + p_4}
    \end{align*}
    after cancellation this implies that $T_{p_2}T_{p_3} = T_{p_3}T_{p_2}$ which is a contradiction because we assumed $[p_2, p_3] = 1$. Hence, there is no such CNC set.
    
  }

  \Cor{\label{cor:additional_mermin_square_conditions}
    Let $q_1, q_2, q_3, q_4 \in \Omega$ for some closed subset $\Omega \subset E_n$. If any of the following conditions are satisfied, $\Omega$ is not a CNC.
    \begin{enumerate}[label=(\roman*)]
      \item $[q_1, q_2] = [q_2, q_3] = [q_1, q_4] = 0$ and $[q_1, q_3] = [q_2, q_4] = [q_3, q_4] = 1$
      \item $[q_1, q_2] = [q_2, q_3] = 0$ and $[q_1, q_3] = [q_1, q_4] = [q_2, q_4] = [q_3, q_4] = 1$
    \end{enumerate}
  }
  \Proof{
    $(i)$ Since $\Omega$ is closed, we have $q_1 + q_4, q_2 + q_3 \in \Omega$. Taking $p_1 = q_3, p_2 = q_2 + q_3, p_3 = q_1 + q_4, p_4 = q_4$ satisfies Eq.~(\ref{eq:mermin_relations}).
    $(ii)$ We have $q_1 + q_2, q_2 + q_3 \in \Omega$ by closedness. Taking $p_1 = q_1 + q_2, p_2 = q_2, p_3 = q_4, p_4 = q_2 + q_3$ satisfies Eq.~(\ref{eq:mermin_relations}). 
  }

  \Lem{\label{lem:transitivy_of_commutation}
    Let $\Omega \subset E_n$ be a CNC set and $a_1, a_2, a_3 \in \Omega$. If $[a_1, a_2] = [a_2, a_3] = 0$ and $a_2 \notin I(\Omega)$ then $[a_1, a_3] = 0$.
  }
  \Proof{
    Assume that $[a_1, a_2] = [a_2, a_3] = 0$ and $a_2 \notin I(\Omega)$ but $[a_1, a_3] = 1$. Take $a_4 \in \Omega$. There are 8 possibilities for the tuple $([a_1, a_4], [a_2, a_4], [a_3, a_4])$.
    We need to show that all of them are impossible. We show it in 5 cases.
    \begin{enumerate}[label=(\roman*)]
      \item $[a_2, a_4] = 0$\\
      We have $[a_i, a_2] = 0$ for all $i \in \set{1,3,4}$. Since $a_2 \notin I(\Omega)$, there exists $b \in \Omega$ such that $[b, a_2] = 1$.
      Let $\eta_1 = [a_1, b]$, $\eta_3 = [a_3, b]$. Then taking $p_1 = a_1 + \eta_1 a_2, p_2 = a_2, p_3 = b, p_4 = \eta_3 a_2 + a_3$ satisfies Eq.~(\ref{eq:mermin_relations}).
      \item $([a_1, a_4], [a_2, a_4], [a_3, a_4]) = (0, 1, 0)$\\
      Taking $p_i = a_i$ for all $i \in \set{1,2,3,4}$ satisfies Eq.~(\ref{eq:mermin_relations}).
      \item $([a_1, a_4], [a_2, a_4], [a_3, a_4]) = (0, 1, 1)$\\
      Taking $q_i = a_i$ for all $i \in \set{1,2,3,4}$ satisfies the case $(i)$ of Corollary~\ref{cor:additional_mermin_square_conditions}.
      \item $([a_1, a_4], [a_2, a_4], [a_3, a_4]) = (1, 1, 0)$\\
      By symmetry, $(iii)$ also proves this case.
      \item $([a_1, a_4], [a_2, a_4], [a_3, a_4]) = (1, 1, 1)$\\
      Taking $q_i = a_i$ for all $i \in \set{1,2,3,4}$ satisfies the case $(ii)$ of Corollary~\ref{cor:additional_mermin_square_conditions}.
    \end{enumerate}
    Hence, all of the cases are impossible and we have a contradiction.
  }
}

\begin{proof}[{\bf Proof of Theorem \ref{thm:classification CNC}}] 
    Let us write {$I=I(\Omega)$ and $W=W(\Omega)$}. If $\Omega = I$ then choosing $N=1$ and $a_1=0$ gives the desired decomposition. So, assume that $\Omega \neq I(\Omega)$ {and $N\geq 2$}. 
    Let $\sim$ be {the} relation on $\set{a_1, \cdots, a_N}$ defined as $x \sim y$ if and only if $[x, y] = 0$. Clearly, it is reflexive and symmetric. {By} Lemma~\ref{lem:transitivy_of_commutation}, 
$\sim$ is also transitive.
{Therefore,} $\sim$ is an equivalence relation {on the set} $\set{a_1, \cdots, a_N}$. {This set} is partitioned {into} equivalence classes {under} $\sim$. 
Moreover, {$\Omega\neq I$} implies {that there are} at least two equivalence classes. 
Let $x_1, x_2$ be two distinct elements in the same equivalence class{, i.e., $[x_1,x_2]=0$}.
{Then,} $x_1 + x_2 \neq 0$ {belongs to} $\Omega$ {by closedness and} $[x_1, x_1 + x_2] = [x_2, x_1 + x_2] = 0$ implies $x_1 + x_2$ must lie in the equivalence class containing $x_1$ and $x_2$. 
Since there are at least two equivalence classes, there must exist $y \in \set{a_1, \cdots, a_N} \setminus \set{x_1, x_2, x_1 + x_2}$ such that $[x_1, y] = [x_2, y] = [x_1 + x_2, y] = 1$.
However, this is not possible since $[x_1 + x_2, y] = [x_1, y] + [x_2, y]$. So, there cannot be an equivalence class of $\sim$ with size greater than $1$. Therefore, all equivalence classes of $\sim$ are singletons, which means $\set{a_1, \cdots, a_N}$ is an anti-commuting set.
    Since $W$ is a $2m$-dimensional symplectic vector space and $\set{a_1, \cdots, a_N}$ is an anti-commuting set in $W$ we must have $N \le 2m+1$. 
\end{proof}

\begin{proof}[{\bf Proof of Corollary \ref{cor:maximality}}]
{First assume that $\Omega=I$.}
Let $I = \Span{e_1, \cdots, e_n}$ {and d}efine $J = \Span{e_1, \cdots, e_{n-1}}$. Then there exists a symplectic subspace $W$ {of} dimension $2$ such that $J^\perp = J \oplus W$. Clearly, $\set{e_n}$ is an anti-commuting set in $W$ that is not maximal. 
    Therefore, we can extend it to a JW {set} $\set{e_n, a_1, a_2}$ in $W$ by Lemma~\ref{lem:extending_nonmaximal_anticommuting_to_jw}. Then  
    $\Omega = \Span{e_n, J} \cup \Span{a_1, J} \cup \Span{a_2, J}$ is a CNC {containing $I$.}

{Next, a}ssume that $1 < N < 2m+1$.
    If $\set{a_1, \cdots, a_N}$ is not a maximal anti-commuting set, then we can extend it to a JW set $\set{a_1, \cdots, a_N, a_{N+1}, \cdots, a_{2m + 1}}$ of $W$ by Lemma~\ref{lem:extending_nonmaximal_anticommuting_to_jw}.
Then, we  see that $\tilde{\Omega} = \bigcup_{k=1}^{2m+1} \Span{a_k, I}$ is {a} CNC set that strictly contains $\Omega$. 
So $\Omega$ is not maximal. If $\set{a_1, \cdots, a_N}$ is a maximal anticommuting set, then $N = 2l +1$ for some $1 \le l < m$ and $a_{2l+1} = \sum_{i=1}^{2l}a_i$.
    So, we have $a_{2l+1} \in S = \Span{a_1, \cdots, a_{2l}}$ for some symplectic subspace $S$ of dimension $2l$.
    Since $W$ and $S$ are symplectic subspaces, we can find conjugate isotropics $J, J'$ and write a JW decomposition $W = S \oplus J \oplus J'$.
    We have $J^\perp = S \oplus J$. Define $\tilde{\Omega} = \bigcup_{k=1}^{2l+1} \Span{a_k, \tilde{I}}$ where $\tilde{I} = I \oplus J$. We can see that $\tilde{\Omega}$ is a CNC set that strictly contains $\Omega$. So, $\Omega$ is not maximal. Therefore, maximal CNC sets have $N = 2m + 1$.

Lastly, we show that if $N = 2m+1$ for some $1 \le m \le n$ then $\Omega$ is a maximal CNC. Assume that $\set{a_1, \cdots, a_{2m+1}}$ is an anti-commuting set in $I^\perp$ and $\Omega = \bigcup_{k=1}^{2m+1} \Span{a_k, I}$. Let $\bar{a}_i$ be the projection of $a_i$ onto $W$. Then $\set{\bar{a}_1, \cdots, \bar{a}_{2m+1}}$ is an anti-commuting set in $W$ and $\Omega = \bigcup_{k=1}^{2m+1} \Span{\bar{a}_k, I}$.
    Let $\tilde{\Omega}$ be a maximal CNC set such that $\Omega \subset \tilde{\Omega}$. Denote $I(\tilde{\Omega})$ by $\tilde{I}$ and $W(\tilde{\Omega})$ by $\tilde{W}$. Let $\operatorname{dim}(\tilde{I}) = n - \tilde{m}$ for some $1 \le \tilde{m} \le n$. Then there exists an anti-commuting set $\set{b_1, \cdots, b_{2\tilde{m}+1}}$ in $\tilde{I}^\perp$ such that $\tilde{\Omega} = \bigcup_{k=1}^{2\tilde{m}+1} \Span{b_k, \tilde{I}}$. Since $\set{a_1, \cdots, a_{2m+1}}$ is an anti-commuting set, each $a_i$ must be in different isotropic subspace $\Span{b_k, \tilde{I}}$. Hence we must have $\tilde{m} \ge m$. 
    Also, $|\Omega| = (2m+2)2^{n-m} \le (2\tilde{m}+2)2^{n-\tilde{m}} = |\tilde{\Omega}|$ implies $m \ge \tilde{m}$. Therefore, $m = \tilde{m}$ which implies $\Omega = \tilde{\Omega}$. Hence, $\Omega$ is maximal.
\end{proof}


\section{{Pseudocode for the algorithms}}


\begin{algorithm}[H] 
\textbf{Input:} Current tableau $\bar \tT_{\Omega}^{\gamma}$, Measurement basis $R(b)$, commutation information $\tT_{\Omega}(b)$.

\vspace{0.5 em}

\textbf{Output:} Updated tableau $\bar \tT_{\tilde\Omega(b)}^{(\gamma\ast r_{b})\ast s}$ and outcome $r_{b}$.

\begin{enumerate}
    \item \textbf{Reorganize JW Elements:}
    \begin{itemize}
        \item {Identify indices of commuting and anti-commuting JW elements using $\tT_{\Omega}(b)$.}
    		\item Compute $t \gets \frac{1}{2}|\aA_{b}|$
        \item Swap rows in the tableau to place anti-commuting elements in the first $2t$ rows:
    \end{itemize}
    
{%
\item \textbf{Generate New Stabilizers and Destabilizers:}

    \begin{itemize}
    		\item {Let} $K = 2(n-m)$ and {randomly} {sample the measurement outcome} $r_{b}\in \ZZ_{2}$.
    		\item First Stabilizer and Destabilizer:
    		\begin{itemize}
 		\item \textbf{For} $i = K+2t+1,\cdots,2n+1$:
		\begin{itemize}
		\item[-] $R_{K+1} \gets R_{K+1} + R_{i}$
		\end{itemize}		 		
        \item $\tilde R_{K+1}\gets (R_{K+1},r_{b})$ and $\tilde R_{K+2} \gets (R_{K+2},0)$.
    		\end{itemize}
    		\item {Generate Remaining New Stabilizers and Destabilizers:}
    \begin{itemize}
 		\item Introduce extra row: $R(s) \gets R_{K+1}+R_{K+2}$
        \item \textbf{For} $i = 2, \dots, t$:
        \begin{itemize}
        \item[-] Randomly choose $s_{i}\in \ZZ_{2}$.
        \item[-] Destabilizer: $ R_{K+2i} \gets R_{K+2i} + R(\text{S})$,\quad $\tilde R_{K+2i} \gets (R_{K+2i},0)$,
        \item[-] Stabilizer: $ R_{K+2i{-1}} \gets R_{K+2i{-1}} + R(\text{S})$,\quad $\tilde R_{K+2i{-1}} \gets (R_{K+2i{-1}},s_{i})$,
        
        	\item[-] Update $R(\text{S})\gets R(\text{S})+R_{K+2i-1}+R_{K+2i}$.
        \end{itemize}
    \end{itemize}
    \end{itemize}
    \item \textbf{Update Commuting JW Elements:}
    \begin{itemize}
        \item \textbf{For} $i = K+2t+1,\cdots,2n+1$:
        \begin{itemize}
        \item[-] $\tilde R_{i}\gets  \tilde R_{i}\ast \tilde R_{K+1}$
        \end{itemize}
    \end{itemize}
   	}
    
\item \textbf{Rearrange Stabilizers and Destabilizers:}

    \begin{itemize}
    		\item {Organize rows as follows
        \begin{itemize}
          \item[-] Rows $1, \cdots, n-m$ are the previous destabilizer rows.
          \item[-] Rows $n-m+1, \cdots, n-m+t$ are the new destabilizer rows $K+2i$ ($i=1,\cdots,t$).
          \item[-] Rows $n-m+t+1, \cdots, 2(n-m)+t$ are the previous stabilizer rows.
          \item[-] Rows $2(n-m)+t+1, \cdots, 2\left(n-(m-t)\right)$ are the new stabilizer rows $K+2i-1$ ($i=1,\cdots,t$).
        \end{itemize}
        }
    \end{itemize}
    \item \textbf{Update {type}:}
    \begin{itemize}
        \item Update $m \gets m - t$.
    \end{itemize}
    \item \textbf{Return:} Updated tableau and measurement outcome {$r_b$}.
\end{enumerate}
\caption{\label{alg:case-iii} {Algorithm for simulating measurement outcome and tableau update when $b\in I^{\perp}/\Omega$.}%
}
\end{algorithm}

\newpage


\begin{algorithm}[H] 
\textbf{Input:} Current tableau $\bar \tT_{\Omega}^{\gamma}$, Measurement basis $R(b)$, commutation information $\tT_{\Omega}(b)$.

\vspace{0.5 em}

\textbf{Output:} Updated tableau $\bar \tT_{\tilde\Omega(b)}^{\gamma\ast r_{b}}$ and outcome $r_{b}$.

\begin{enumerate}
    \item \textbf{Find Minimum Anti-commuting Stabilizer:}
    \begin{itemize}
        \item $\lambda \gets \min \{i+n-m\}_{i=1}^{n-m}$ such that $[R_{\lambda},R(b)]=1$.
    \end{itemize}
    \item \textbf{Modify Anti-commuting Rows:}
    \begin{itemize}
        \item \textbf{For} each row $i > \lambda$:
        \begin{itemize}
            \item If $[R_{i},R(b)] = 1$ then:\quad $\tilde R_{i}\gets \tilde R_{i}\ast \tilde R_{\lambda}$
        \end{itemize}
    \end{itemize}
    \item \textbf{Update Stabilizer and Destabilizer:}
    \begin{itemize}
        \item New Destabilizer: $ R_{\lambda - n+m}\gets R_{\lambda}$
        \item New Stabilizer:
        \begin{itemize}
        \item[-] Randomly sample the measurement outcome $r_{b}\in {\ZZ_{2}}$.
        \item[-] $\tilde R_{\lambda}\gets (R(b),r_{b})$.
        \end{itemize}
    \end{itemize}
    \item \textbf{Return:} Updated tableau and measurement outcome $r_{b}$.
\end{enumerate}
\caption{\label{alg:case-iv} {Algorithm for simulating measurement outcome and tableau update when $b\notin I^{\perp}$.}%
}
\end{algorithm}

\newpage
 
{
\begin{algorithm}[H] 
  \caption{\label{alg:tableau_from_CNC} {Algorithm for creating a tableau from given maximal CNC operator $A_\Omega^\gamma$.}}
\textbf{Input:} Maximal CNC operator $A_\Omega^\gamma$.

\vspace{0.5 em}

\textbf{Output:} Tableau $\tT_{\Omega}^{\gamma}$.

\begin{enumerate}
    \item \textbf{Find the Stabilizer:}
    \begin{itemize}
        \item Define $I$ as the center of $\Omega$, i.e., the set of all elements that commute with every element of $\Omega$.
        \item Find a basis $\{e_1,\ldots,e_{n-m}\}$ for $I$.
    \end{itemize}
    \item \textbf{Find the Destabilizers:}
    \begin{itemize}
      \item Extend $\{e_1,\ldots,e_{n-m}\}$ to a symplectic basis $\{e_1, f_1,\ldots, e_n, f_n\}$ of $E_n$ using symplectic Gram-Schmidt orthogonalization.
      \item Then $\{f_1,\ldots,f_{n-m}\}$ is a basis for the destabilizer $I'$.
    \end{itemize}
    \item \textbf{Find JW elements:}
    \begin{itemize}
        \item Define symplectic subspace $W$ as $W = \Span{e_{n-m+1}, f_{n-m+1}, \cdots, e_{n}, f_{n}}$.
        \item Let $P_W$ be the orthogonal projection onto $W$ and set $\Omega' = \Omega \setminus I$.
        \item Initialize $JW := \set{}$.
        \item \textbf{For} $i=1$ \textbf{to} $2m+1$:
        \begin{itemize}
          \item Choose $\hat{a}_i \in \Omega'$ and compute $a_i = P_W(\hat{a}_i)$.
          \item Add $a_i$ to $JW$.
          \item Remove all elements that commute with $a_i$ from $\Omega'$.
        \end{itemize}    
    \end{itemize}
    \item \textbf{Construct the Tableau:}
    \begin{itemize}
      \item Construct tableau $\tT_{\Omega}^{\gamma}$ as follows:
  \[
    \tT_{\Omega}^{\gamma}
  =
  \left(
    \begin{array}{c|c}
      \dashedentry{0.217\linewidth}{f_1} & \gamma(f_1) \\
      \vdots & \vdots \\
      \dashedentry{0.2\linewidth}{f_{n-m}} & \gamma(f_{n-m}) \\
      \dashedentry{0.217\linewidth}{e_1} & \gamma(e_1) \\
      \vdots & \vdots \\
      \dashedentry{0.2\linewidth}{e_{n-m}} & \gamma(e_{n-m}) \\
      \dashedentry{0.217\linewidth}{a_1} & \gamma(a_1) \\
      \vdots & \vdots \\
      \dashedentry{0.193\linewidth}{a_{2m+1}} & \gamma(a_{2m+1})
    \end{array}
    \right)
  \quad
  \begin{array}{l}
  \left.\rule{0pt}{2.6em}\right\}\; \text{Destabilizer}\\[2em]
  \left.\rule{0pt}{2.5em}\right\}\; \text{Stabilizer}\\[1em]
  \left.\rule{0pt}{2.5em}\right\}\; \text{JW elements}\\[2em]
  \end{array}
  \]
    \end{itemize}
    \item \textbf{Return:} Tableau $\tT_{\Omega}^{\gamma}$.
\end{enumerate}
\end{algorithm}
}


\begin{algorithm}
\textbf{Input:} Initial distribution $W_{\rho}\geq 0$. Stabilizer instruments $\Phi_{1},\cdots,\Phi_{N}$.

\textbf{Output:} Set of outcomes $\lL = \{s_{1},\cdots,s_{T}\}$.

\vspace{0.5 em}

\noindent $\bullet$ Initialize empty outcome set $\lL$.\\
\noindent $\bullet$ Sample initial tableau $\tilde \tT_{\alpha_{0}}$ from $W_{\rho}$.

\noindent $\bullet$ For $i=1,\cdots, T$:\\
\quad - For instrument $\Phi_{i}$, tableau update $\tilde \tT_{\alpha_{i-1}}\mapsto \tilde \tT_{\alpha_{i}}$.\\
\quad - Obtain outcome $s_{i}$ from tableau: $\lL\gets \lL\cup \{s_{i}\}$.\\
\noindent $\bullet$ Output $\lL$.\\
\caption{\label{alg:weak-sim} {Algorithm for weak simulation using the phase space tableau.}%
}
\end{algorithm}

\begin{algorithm}
\textbf{Input:} Initial distribution $W_{\rho}\geq 0$. Stabilizer instruments $\Phi_{1},\cdots,\Phi_{N}$. Number of samples $M$.

\textbf{Output:} Estimate of $\pi(\vec x)$.

\vspace{0.5 em}

\noindent $\bullet$ Initialize estimator $\hat \pi \gets 0$.\\

\noindent $\bullet$ For $i=1,\cdots, M$:\\
\quad - Sample $\alpha_{i}$ from $\tilde W_{\rho}$.\\
\quad - Estimate $\hat E_{i}$: $\hat E_{i}\gets E(\vec x|\alpha_{i})$.\\
\quad - $\hat \pi \gets \pi + \hat E_{i}$.\\
\noindent $\bullet$ Output $\pi/M$.
\caption{\label{alg:born-rule-estimation} {Algorithm for producing additive-precision estimates of Born rule probabilities.}%
}
\end{algorithm}



\end{document}